\newcommand{\ul}{\underline}
\newcommand{\mc}{\mathcal}
\newcommand{\ds}{\displaystyle}
\newtheorem{theorem}{Theorem}[section]
\newtheorem{definition}[theorem]{Definition}
\newtheorem{proposition}[theorem]{Proposition}
\begin{document}

\title{A Cross-Layer Approach for Distributed Energy-Efficient Power Control in Interference Networks}
 \author{Vineeth S. Varma$^{1,2}$, Samson Lasaulce$^1$, Yezekael Hayel$^3$, and Salah Eddine Elayoubi$^2$
 \thanks{$^1$ CNRS-Supelec-Univ. Paris Sud 11, 91192 Gif-sur-Yvette, France, email: \{vineeth.varma,samson.lasaulce\}@lss.supelec.fr}
\thanks{$^2$ Orange Labs, 92130 Issy-les-Moulineaux, France, email: salaheddine.elayoubi@orange.com}
\thanks{$^3$ University of Avignon, 84911 Avignon, France, email: yezekael.hayel@univ-avignon.fr}

}%

\markboth{IEEE Transactions on Vehicular Technology}%
{Trans}%
\maketitle

\begin{abstract}
In contrast with existing works which rely on the same type of energy-efficiency measure to design distributed power control policies, the present work takes into account the presence of a finite packet buffer at the transmitter side and the impact of transport protocols. This approach is relevant when the transmitters have a non-zero energy cost even when the radiated power is zero. A generalized energy-efficiency performance metric integrating these features is constructed under two different scenarios in terms of transport layer protocols characterized by a constant or an adaptive packet arrival rate. The derived performance metric is shown to have several attractive properties in both scenarios, which ensures convergence of the used distributed power control algorithm to a unique point. This point is the Nash equilibrium of a game for which the equilibrium analysis is conducted. Although the equilibrium analysis methodology is not new in itself, conducting it requires several non-trivial proofs, including the proof of quasi-concavity of the payoff functions. A thorough numerical analysis is provided to illustrate the effects of the proposed approach, and provides several valuable insights in terms of designing interference management policies.
\end{abstract}

\begin{IEEEkeywords}
Cross-layer, distributed optimization, distributed power control, energy-efficiency, game theory, Nash equilibrium, non-cooperative game.
\end{IEEEkeywords}

\section{Introduction}
\label{sec:introduction}
Designing green wireless networks \cite{lister-09,palicot-2005,greentouch-10} has become increasingly important for modern wireless networks, in particular, to manage operating costs. A challenge for modern (beyond 4G and 5G) cellular networks is not only to respond to the explosion of data rates, but also to manage network energy consumption. The concept of small cell networks appears as a good candidate solution to raise such a challenge (see e.g., \cite{hoydis-vtm-2011}). As small cell networks will be distributed to large extent and subject to high inter-cell interference, designing distributed energy-efficient interference management schemes appear as a natural need.

For being able to design green networks, an energy-efficiency (EE) metric is needed.
In \cite{goodman-pc-2000}, the EE of a communication between a transmitter and a receiver is defined as the ratio of the net data rate to the radiated power; the corresponding quantity is a measure of the average number of bits successfully received per joule consumed at the transmitter. Quite recently, there has been a resurgence of interest in this performance metric. There are several reasons for this and we will only provide a few of them. First, the EE as defined in \cite{goodman-pc-2000}, mathematically translates in a simple manner the trade-off between the benefit of increasing the transmit power in terms of data rate, and the induced cost in terms of consumed energy or amount of created interference. Second, as motivated in \cite{tse-2010}, there are applications in which the allowable delay is not tightly constrained. Therefore, the data rate is a less relevant measure than the energy needed to transmit the information and EE naturally appears as a metric to be optimized. We furthermore explain in this paper (Sec. \ref{sec:const}) why maximizing EE amounts to minimizing the total energy consumed by the transmitter when packet retransmission is considered.

Remarkably, the energy-efficiency metric proposed in \cite{goodman-pc-2000} possesses a good mathematical structure for optimization, especially for the distributed case, which partly explains why it has been applied in a large variety of scenarios of practical interest. Some examples are as follows. In \cite{meshkatiCDMA06}, it is applied to design a power allocation scheme in distributed multi-carrier CDMA (code division multiple access systems) systems by using a static non-cooperative game model (just as \cite{goodman-pc-2000}). In \cite{Mesh09}, it is used to account for the users delay requirements in energy-efficient wireless systems. In \cite{buzzi-2012} and \cite{valuetools}, also based on a static game model, the authors used the metric under consideration for sub-carrier assignment in distributed OFDMA (orthogonal frequency division multiple access) multicellular networks. In \cite{bacci-tsp-2013}, the authors study the problem of energy-efficient contention-based synchronization in OFDMA systems. In \cite{veronica-2009,belmega-tsp-2010}, the authors study the problem of pre-coding in MIMO (multiple input multiple output) point-to-point communication systems. In \cite{letreust-tvt-2013}, the EE metric is exploited to study the impact of sensing in terms of EE in cognitive radio networks.


Although fully relevant, the performance metric introduced in \cite{goodman-pc-2000} and used in the related works (this, in particular, includes those cited above) has left several issues unexplored, which has motivated the work reported here. In \cite{goodman-pc-2000}, and all related references known to the authors, the numerator of EE is (up to a constant) a packet success rate which only accounts for packet losses due to bad channel conditions. In the present work, we propose a significant generalization of the metric used in the aforementioned works to the case where packets are buffered in a finite size queue. Therefore, the packet loss due to overflows is also taken into account. On the other hand, we will show in Sec. III that accounting for this effect is relevant in terms of EE, only when the transmitter has a cost in terms of consumed power independent of the radiated power; this means that the transmitter consumes power even while waiting for new packets to arrive. It turns out that this is precisely what happens for most wireless transmitters. Indeed, the transmitter energy consumption is not only induced by the radiated power but also results from other causes such as the transmitter supply consumption \cite{richter}. Note that the authors of \cite{betz} were the first to consider a transmission cost of the type ``radiated power + constant'' to design distributed power control strategies for multiple access channels; in their model, the constant represents the computation power at the receiver. Our approach is markedly different from \cite{betz}, not only because the problem is tackled from a cross-layer perspective, but we also consider the more general case of distributed interference networks with a quality of service (QoS) constraint. For this purpose, two different models for the packet arrival rate are considered: 1) The quite simple model where the arrival rate is a constant (which is referred to as CAR for constant arrival rate). This case is useful e.g., for real-time traffic like video or streaming; 2) The more interesting model in which the arrival rate is related to the SINR (signal-to-noise plus interference ratio) through a quite generic relationship, is more suited to delay tolerant traffic like file transfer and adaptive rate services like WebRTC \cite{Google}; this case is referred to as AAR for adaptive arrival rate.

The main contributions of this paper can be summarized as follows\footnote{Note that preliminary results were presented in \cite{varma-icc-2012}.}:

\begin{enumerate}
\item To the best of the author's knowledge, this is the first time that the EE performance metric originally introduced in \cite{goodman-pc-2000} is generalized to a cross-layer approach, taking into account, the effects of the presence of a queue with finite size at the transmitter.
\item Apart from a few exceptions (which includes \cite{betz,varma-icc-2012,zappone-twc-2013}), all related works using EE in the sense of \cite{goodman-pc-2000} only consider the radiated power while, here, the total power consumed by the transmitter is taken into account. Since an affine relation between the radiated power and the total power is assumed \cite{richter}, this might seem as an incremental change but the presence of this fixed cost is the key ingredient which makes the cross-layer analysis fully relevant;
\item We apply the Nash equilibrium (NE) analysis (methodology reviewed in \cite{lasaulce-book} for example) in the presence of the two aforementioned features. Although the equilibrium analysis methodology is not new in itself, conducting it requires several non-trivial results to be proved, including the proof of quasi-concavity of the derived performance metric. Indeed, quite surprisingly, both in the case of CAR and AAR, it can be shown to be quasi-concave with respect to the radiated power. Similarly, even though the more general performance metric under investigation is seemingly quite complex, the important property of standardness for the best-responses can be shown to be retained. This guarantees both NE uniqueness and the convergence of relevant distributed optimization algorithms (such as the one used here) to this equilibrium;
\item Apart from a few exceptions (which includes \cite{Mesh09}), all related works using EE in the sense of \cite{goodman-pc-2000} do not consider the QoS aspect. The QoS constraint is precisely considered here in the case of the CAR protocol (note that the AAR protocol is assumed to automatically controls the packet rates by observing the packet loss). The corresponding game has semi-continuous payoffs and this seems to be the first instance of an energy-efficient power control game to be identified as being semi-continuous, which allows us to prove the existence of an equilibrium by exploiting a fixed point theorem from \cite{dasgupta-1986};
\item A thorough numerical analysis is provided to assess the benefits from taking the presence of a queue with finite size into account and to give new insights into designing energy-efficient communications systems. These insights include assessing gains in terms of energy consumed by the whole transmit device, following the interpretation we provide of energy-efficiency maximization as energy minimization in Sec. \ref{sec:const}.
\end{enumerate}

This paper is structured as follows. In Sec. II, we present the general system model. In Sec. III, we construct the proposed performance metric highlighting contributions 1) and 2). In Sec. IV, we define the two power control games of interest and conduct the equilibrium analysis, which is essential to characterize the convergence of the suggested distributed algorithm (existence and uniqueness of the convergence points), highlighting contributions 3) and 4). Sec. V highlights interesting numerical results that support the proposed approach, i.e, 5). Finally, we conclude the paper and several extensions of this work are provided.

\section{System model}
\label{sec:system model}

The purpose of this section is to describe the communication model considered for cross-layer energy-efficient power control, which consists in expressing the SINR and packet arrival rate for a given user. A general interference network is considered with $N$ transmitter-receiver pairs, in which each transmitter communicates with its respective receiver, while under interference from the other transmitters \cite{carleial-1978}. Let $\mc{N}=\{1,2,...,N\}$ be the set of transmitters. Transmitter $i \in \mc{N}$ transmits with power level $p_i \in [0, P_{\max}]$, where $P_{\max} > 0$ is the maximum possible transmit power, which is identical for all transmitters (the analysis does not lose its generality with this assumption). The vector $\ul{p} = \left(p_1,p_2,\dots,p_N \right)$ will be referred to as the power or action profile on the current data block or packet. We also denote by $\ul{p}_{-i}$, the $(N-1)$ dimensional vector obtained by removing the $i^{\mathrm{th}}$ component from $\ul{p}$. For notational simplicity, we also sometimes represent $\ul{p}$ as $(p_i,\ul{p}_{-i})$, when the dependence of certain functions on $p_i$ has to be shown explicitly. By transmitting at $p_i$, each user $i$ has a resulting SINR $\gamma_i$ at his receiver of interest which is a function of $\ul{p}$, and is assumed to be given by:
\begin{equation}
\gamma_i(\ul{p}) = \frac{p_i g_{ii}}{\sigma_i^2 + \ds{\sum_{j=1, j\neq i}^N p_j g_{ji}}}
\label{eq:gamma}
\end{equation}
where $g_{ji}$ represents the quasi-static or block fading channel gain of the link between transmitter $j$ and receiver
$i$ on a given band, $\sigma_i^2 = \sigma^2$ is the variance of the Gaussian noise at receiver $i$ (these variances can be assumed to be equal without any loss of mathematical generality). In wireless systems such as those being implemented in recent cellular system standards, packets arrive from an upper layer (e.g. IP layer) following an arrival rate that is related to the SINR. In this paper, we assume that the packet arrival process follows a Bernoulli process with probability $q_{\mathrm{X}}(\gamma_i(\ul{p}))$ where $X \in \left\{\mathrm{CAR}, \mathrm{AAR} \right\}$; this corresponds to the classical ON/OFF sources \cite{takagi-85}. In the case of CAR, it trivially expresses as:
\begin{equation}
\forall i \in \mc{N}, \ q_{\mathrm{CAR}}(\gamma_i(\ul{p})) = q
\end{equation}
with $q \in [0,1]$. This is best used for real-time applications where delay is not tolerable, however, in some applications this packet arrival model is not suitable. For instance, this is the case for applications such as file transfer or browsing. In such a situation, there is no constant stream of data and so the arrival rate can be optimized for best performance in terms of data rate and QoS. This is one of the reasons why we also investigate the case of AAR for which we assume that the arrival rate is given by:
 \begin{eqnarray}
\forall i \in \mc{N}, \ q_{\mathrm{AAR}}(\gamma_i(\ul{p}))=g(\Phi_{\mathrm{AAR}}(\gamma_i(\ul{p})) )
\label{eq:tcp}
\end{eqnarray}
where $\Phi_{\mathrm{AAR}}$ is the packet loss function and $g$ is a function which is assumed to be continuous, invertible, and has an inverse function $g^{-1}$ which is twice differentiable, decreasing, and convex. The goal, by making these assumptions, is to make ``compatible'' the implemented AAR scheme with the fact that power control is distributed. Indeed, the existence of an NE is highly desirable for power control since this means the distributed algorithm will stabilize to a point which is predictable and whose performance can be assessed. A sufficient condition for having the existence of a pure NE is that every payoff function be quasi-concave w.r.t. to the individual strategy. The properties assumed for $g$ and its inverse corresponds to a class of AAR schemes which guarantees quasi-concavity. It can be verified that some important models on packet rate control satisfy these properties. To provide a specific example, the widely used and very useful approximation of the arrival rate process for the Transmission Control Protocol (TCP), which is due to \cite{Padhye98}, verifies these conditions. Therein, $g$ is merely given by $g(\Phi) = \frac{\kappa}{\sqrt{\Phi}}$, where $\kappa \in [0,1] $ is a parameter which depends on the system design and the round trip time. The resulting rate can be interpreted as the average value for the rate. It is very likely that our assumptions can be refined but they can be  seen as a first attempt towards characterizing the classes of AAR schemes which can operate in a harmonious manner with distributed power control schemes.

\emph{Remark 1.} The CAR protocol can also be seen as a constant piece-wise approximation of any adaptive arrival rate protocol in which arrival rate variations are much more slower than channel variations. On the other hand, the AAR case aims at better understanding more complex scenarios where both arrival rate and channel variations have quite similar time-scales. This is close to WebRTC congestion control protocols, like the one proposed by Google \cite{Google}, where the sending rate is adapted based on the observed packet loss \cite{Singh}.

\emph{Remark 2.} It would be possible to study a more general communication scenario by considering multi-band communications, MIMO communications, a more advanced reception scheme (e.g., interference cancellation as in \cite{lasaulce-twc-2009}), and by integrating more the medium access control protocol (such as a carrier sense multiple access -CSMA- protocol). There are several reasons why we do not treat these scenarios here. First, we want to emphasize in a manner as clear as possible the real contributions of this paper namely, the introduction of a queue for the problem of energy-efficient power control. The objective is to enable all the transmitters in the system to operate in the same spectrum or carrier at the same time by managing the interference level; This is very useful e.g., for cellular systems with intercell interference and emerging applications such as small cells networks. Second, studying the power control problem is the main step towards these extensions. For instance, in \cite{meshkatiCDMA06} in which the authors address EE over multi-carrier multiple access channels, it is proved that the best selfish/equilibrium policy for a transmitter is to select its best carrier (in terms of SINR) and apply the single-carrier policy to tune the power level over this carrier. Therefore, the assumed model can be understood as a single-band model (e.g., several base stations which try to mitigate inter-cell interference on a given band) or a multi-band model for which interference is managed for the selected channel or interference management is performed independently from band to band.

\emph{Remark 3.} Note that we do not assume the presence of a ``central'' node which would implement multiuser scheduling. The instantaneous (individual) channel gain need not assumed to be known to a given transmitter and assuming a user selection scheme would require more coordination and more knowledge in terms of channel state information (e.g., global channel state information) whereas this is what we try to avoid for the considered framework namely, distributed power control for interference networks.

\section{A new energy-efficiency performance metric}
\label{sec:performance-metric}

\subsection{Construction}
\label{sec:const}
In \cite{goodman-pc-2000}, EE is defined as the ratio between the average net data transmission rate and the power consumed for sending a given packet. When the radiated power is considered as the transmission cost, this ratio merely equals $\frac{R f(\gamma_i(\ul{p}) )}{p_i}$. The quantity $R$ is the gross data rate on the radio interface. In this paper, it can be seen as the ``target rate'' which is typically determined by the bandwidth and modulation-coding scheme (MCS) and would be achieved if there were no transmission error (i.e., the block error rate $1-f(.)$ would be vanishing). Therefore, the frequency at which the power control is performed is assumed to be higher than the frequency at which the rate is adapted, which is an important scenario in practice. More justifications can be found e.g., in \cite{goodman-pc-2000,Mesh09}. Other interpretations of $R$ are available in the literature, see e.g., \cite{belmega-tsp-2010}\cite{zappone-twc-2013}. Each packet transmitted on the channel is received without any errors with a probability which depends on the quality of the communication link, the interference, and transmit power levels. The corresponding block or packet success rate (also called efficiency function) is precisely the function $f(\gamma_i(\ul{p}))$ above. The function $f:[0,+\infty) \rightarrow [0,1]$ is a sigmoidal\footnote{A sigmoidal function is a function which is initially convex for $\gamma \in [0,\gamma_+]$ and eventually concave for $\gamma \in [\gamma_+,\infty)$.} or S-shaped function verifying $f(0)=0$ and $\ds{\lim_{x \to \infty}} f(x)=1$ (see \cite{rodriguez} for more details). Common examples for $f$ are $f(x) = (1-e^{-x})^M$ \cite{Mesh09}, $f(x) = e^{-\frac{c}{x}}$ \cite{veronica-2009,belmega-tsp-2010}, where $M\geq1$ is the packet length and $c>0$ is some constant related to spectral efficiency (this relation is specified in Sec. \ref{sec:numerical-results}). Energy-efficiency is particularly relevant when packet re-transmission is allowed. When there is no re-transmission, the energy\footnote{Here, the energy under consideration is the energy associated with the radiated signal.} consumed to send $V$ bits while transmitting at the power level $p_i$ is $p_i \frac{V}{R}$. Minimizing energy amounts to minimizing $p_i$. However, when re-transmission is allowed (typically by using an automatic repeat request -ARQ- protocol, that is used at the physical layer independently of the architecture at the upper layer), the average duration to send a packet equals $\frac{V }{R f(\gamma_i(\ul{p}))}$ and the energy consumed becomes $p_i \frac{V }{R f(\gamma_i(\ul{p}))}$. Clearly, minimizing energy amounts to maximizing EE. This means that, at least in presence of re-transmissions, the classical approach which consists in minimizing $p_i$ (subject to some QoS constraints) induces a loss in terms of minimizing the energy consumption; this will be illustrated in Sec. \ref{sec:numerical-results}. In the scenario investigated in this paper, the fact that both the total power consumed by the transmitter and the presence of a packet buffer with finite size are considered makes the construction of energy-efficiency more involving than the aforementioned derivation.

A simple model which allows one to relate the radiated power to the total consumed power is provided in \cite{richter}; it is given by $P_{\mathrm{total},i} = a p_i + b$, where $a \geq 1, b \geq 0$ are some parameters. We will assume without loss of generality that $a=1$. The quantity $b$ precisely represents the consumed power when the radiated power is zero\footnote{This power consumption occurs even when data is not transmitted due to various causes such as pilot signaling, power amplifier consumption, cooling costs, etc.}. Let $\Phi_{\mathrm{X}}$, $X \in \{\mathrm{CAR}, \mathrm{AAR} \}$, represents the packet loss due to both bad channel conditions and packet buffer finiteness (more details about this is provided a little further). For each packet that enters the queue, $\frac{1}{f(\gamma_i)}$ attempts \footnote{For the sake of clarity, here and in other places in the paper, $\ul{p}$ is omitted from the notations.} on transmitting it is made, as the transmission is successful only with a probability $f(\gamma_i)$. Additionally, from the previous equations, we have shown that, on average, $q_{\mathrm{X}}(\gamma_i)[1-\Phi_{\mathrm{X}}(\gamma_i)]$ packets come out of the queue. Hence, the average number of transmission attempts is given by $\frac{q_{\mathrm{X}}(\gamma_i)[1-\phi_{\mathrm{X}}(\gamma_i)]}{f(\gamma_i)}$ leading to an expected power cost of $b + p_i \frac{q_{\mathrm{X}}(\gamma_i)   [1-\Phi_{\mathrm{X}}(\gamma_i)]}{f( \gamma_i)}$. Since the net data rate or goodput is given by $R q_{\mathrm{X}}(\gamma_i)  [1-\Phi_{\mathrm{X}}(\gamma_i)]$, we are now able to define the EE metric $\eta_{i,\mathrm{X}}(\ul{p})$ as the ratio between the average net data transmission rate and the average power consumption, which gives:
\begin{equation}
\label{eq:eta}
\eta_{i,X}(\ul{p}) =  R \frac{   q_{\mathrm{X}}(\gamma_i(\ul{p})    )\left[1
-\Phi_{\mathrm{X}}(\gamma_i(\ul{p}) )\right]  }{   b + p_i \frac{q_{\mathrm{X}}(\gamma_i(\ul{p}))   \left[1-\Phi_{\mathrm{X}}(\gamma_i(\ul{p})) \right]}{f( \gamma_i)}}.
\end{equation}
This definition shows that the cross-layer design approach of power control is fully relevant in terms of EE when the transmitter has a cost, which is independent of the radiated power; otherwise, when $b=0$, one falls into the original framework of \cite{goodman-pc-2000}. On the other hand, when $b$ is large, the EE function behaves like a packet success rate function.

Although the efficiency function $f$ (which is assumed to be sigmoidal) can be easily related to the SINR through simple functions such as those mentioned previously, expressing the packet loss function is more involving. Relating $\Phi_{\mathrm{X}}$ to the SINR is the purpose of what follows. A packet is declared to be lost (blocked) only if a new packet arrives when the packet buffer is full and, on the same time-slot, transmission of the packet on the radio interface failed. Note that these two events are independent because the event of ``transmit or not'' for the current packet on the radio interface, does not impact the current size of the queue, but only the one for the next time slot. This amounts to considering that a packet coming at time slot $t$, is rejected at the end of time slot $t$, the packet of the radio interface having not been successfully transmitted. By considering the stationary regime of the queue and assuming the protocol $\mathrm{X}$, the fraction of lost packets $\Phi_{\mathrm{X}}$ can be expressed as follows:
\begin{equation}
\Phi_{\mathrm{X}}(\gamma_i(\ul{p})) =  [ 1 - f(\gamma_i(\ul{p})) ] \Pi_{\mathrm{X}}(\gamma_i(\ul{p}))
\label{eq:phi}
\end{equation}
where $\Pi_{\mathrm{X}}(\gamma_i)$ is the stationary probability that the packet buffer is full. Indeed, as already mentioned, each transmitter is assumed to be equipped with a device that allows the packets to be stored in a memory buffer (of size $K \geq 1$) before transmission. Packets arrive into the buffer and get transmitted through a queuing process at the buffer. Denote by $Q_{i,t}$ the size of the queue for transmitter $i$ at time slot $t$. The size of the queue $Q_{i,t}$ is a Markov process on the state space $\mc{Q}_i=\{0,1,\ldots,K\}$. It is known (see \cite{wolff} for example) that in the stationary regime of the stochastic process $Q_{i,t}$ the probability that the size of the queue equals $K$ is given by:
\begin{equation}
\Pi_{\mathrm{X}}(\gamma_i(\ul{p})) = \frac{\omega_{\mathrm{X}}^K(\gamma_i(\ul{p}))}{1+\omega_{\mathrm{X}}(\gamma_i(\ul{p}))  +\ldots+\omega_{\mathrm{X}}^K(\gamma_i(\ul{p}))}
\label{eq:pi}
\end{equation}
with
\begin{equation} \label{eq:omega}
\omega_{\mathrm{X}}(\gamma_i(\ul{p})) =  \frac{q_{\mathrm{X}}(\gamma_i(\ul{p}))    \left[1
-f(\gamma_i(\ul{p}))\right]}{\left[ 1-q_{\mathrm{X}}(\gamma_i(\ul{p})) \right] f(\gamma_i(\ul{p}))}
\end{equation}
where $X \in \left\{\mathrm{CAR}, \mathrm{AAR} \right\}$.

In the case of $X=\mathrm{AAR}$, the packet arrival rate $q_{\mathrm{AAR}}$ is a function of the packet loss and the packet loss, a function of $q_{\mathrm{AAR}}$. The following proposition ensures that the AAR process achieves an average packet arrival rate according to the following proposition. For the purpose of making the inter-dependency of the two following equations clear, we express explicitly in these equations, some of the parameters used implicitly in the rest of the paper.
\begin{proposition}
The packet arrival rate $q_{\mathrm{AAR}}$ is obtained as the unique fixed point of these equations:
\begin{equation}
\Phi_{\mathrm{AAR}}(\gamma_i(\ul{p}))= (1-f(\gamma_i(\ul{p}))) \Pi_{\mathrm{AAR}} (\gamma_i(\ul{p}))
\end{equation}
where $\Pi_{\mathrm{AAR}} (\gamma_i(\ul{p}))$ has $q_{\mathrm{AAR}}$ as a parameter as seen from (\ref{eq:omega}) and (\ref{eq:pi}), and:
\begin{equation}
q_{\mathrm{AAR}}(\Phi_{\mathrm{AAR}})=g(\Phi_{\mathrm{AAR}}).
\end{equation}
\end{proposition}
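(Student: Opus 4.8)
The plan is to collapse the two coupled fixed-point equations into a single scalar equation and then exploit monotonicity. Fix the action profile, so that $f = f(\gamma_i(\ul{p})) \in (0,1)$ is a constant, and take the packet loss $\Phi_{\mathrm{AAR}}$ as the sole unknown, ranging over the physically admissible interval $[0,1-f]$ (admissible because $\Phi_{\mathrm{AAR}} = (1-f)\Pi_{\mathrm{AAR}}$ with $\Pi_{\mathrm{AAR}} \in [0,1]$). Substituting the protocol relation $q_{\mathrm{AAR}} = g(\Phi_{\mathrm{AAR}})$ into (\ref{eq:omega}), then (\ref{eq:omega}) into (\ref{eq:pi}), and finally (\ref{eq:pi}) into (\ref{eq:phi}), rewrites the pair as $\Phi_{\mathrm{AAR}} = R(\Phi_{\mathrm{AAR}})$, where $R(\Phi) := (1-f)\,\Pi_{\mathrm{AAR}}\!\left(\omega_{\mathrm{AAR}}(g(\Phi))\right)$. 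A joint fixed point of the original equations is then exactly a fixed point of the single map $R$, and $q_{\mathrm{AAR}}$ is recovered as $g$ of that point.

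The key step is to track the monotonicity of each factor composing $R$. By hypothesis $g^{-1}$ is decreasing, hence $g$ itself is decreasing. The map $q \mapsto \omega_{\mathrm{AAR}}(q) = \frac{q(1-f)}{(1-q)f}$ from (\ref{eq:omega}) is strictly increasing on $[0,1)$, since $\frac{d}{dq}\frac{q}{1-q} = \frac{1}{(1-q)^2} > 0$. For the full-buffer probability I would recast (\ref{eq:pi}) in the form $\Pi_{\mathrm{AAR}}(\omega) = \big(\sum_{j=0}^{K}\omega^{-j}\big)^{-1}$, which makes transparent that $\Pi_{\mathrm{AAR}}$ is strictly increasing in $\omega > 0$ (each term $\omega^{-j}$ with $j \geq 1$ decreases, so the denominator decreases), rising from $0$ to $1$; multiplication by the constant $1-f$ preserves this. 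Consequently $R$, being the composition of the decreasing map $g$ with increasing maps, is continuous and strictly decreasing on $[0,1-f]$.

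Existence and uniqueness then follow from one scalar argument. Because $\Pi_{\mathrm{AAR}} \in [0,1]$, the map $R$ sends $[0,1-f]$ into itself, so $\chi(\Phi) := R(\Phi) - \Phi$ is continuous with $\chi(0) = R(0) \geq 0$ and $\chi(1-f) = R(1-f) - (1-f) \leq 0$; the intermediate value theorem yields at least one root $\Phi^{\star}$. Since $\chi$ is the difference of the strictly decreasing $R$ and the increasing identity, it is itself strictly decreasing, hence injective, so the root is unique. The associated $q_{\mathrm{AAR}} = g(\Phi^{\star})$ is therefore the unique fixed point asserted by the proposition.

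I expect the only delicate point to be the behaviour of $R$ near $\Phi = 0$, where $g$ may take a value exceeding one (as in the TCP model $g(\Phi) = \kappa/\sqrt{\Phi}$). As $q_{\mathrm{AAR}}$ is a probability it must be read as capped at $1$, in which case $\omega_{\mathrm{AAR}} \to \infty$, $\Pi_{\mathrm{AAR}} \to 1$ and $R(0) = 1-f$, so the endpoint inequality $\chi(0) \geq 0$ persists and $R$ remains a genuine self-map of $[0,1-f]$. Checking that this capping is compatible with the assumed invertibility and monotonicity of $g$, and that the endpoint inequalities are strict enough to locate $\Phi^{\star}$ in the interior when needed, is the one place where the structural hypotheses on $g$ must be invoked with care; everything else is routine given the monotonicity established above.
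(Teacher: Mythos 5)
Your argument is correct and is essentially the paper's own proof viewed through the bijection $q_{\mathrm{AAR}} = g(\Phi_{\mathrm{AAR}})$: the paper eliminates $\Phi$ and studies the strictly increasing scalar function $F(q) = \Phi_{\mathrm{AAR}}(\gamma_i,q) - g^{-1}(q)$ on $[0,1]$, whereas you eliminate $q$ and study the strictly decreasing $\chi(\Phi) = R(\Phi) - \Phi$ on $[0,1-f]$, with existence from the intermediate value theorem and uniqueness from strict monotonicity in both cases. Your write-up is somewhat more explicit about the monotonicity chain ($g$ decreasing, $\omega_{\mathrm{AAR}}$ increasing in $q$, $\Pi_{\mathrm{AAR}}$ increasing in $\omega$), about the correct range $[0,1-f]$ rather than $[0,1]$, and about the endpoint behaviour when $g(\Phi)$ exceeds one, all of which the paper either defers to its appendix or glosses over.
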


\begin{proof} It can be verified that the two equations are continuous and differentiable. The packet arrival rate $q_{\mathrm{AAR}}(\gamma_i)$ ranges from $0 <g(0) \leq 1$ to $0\leq g(1)<g(0)$ and $\Phi_{\mathrm{AAR}} (\gamma_i)$ ranges from $0$ to $1$. Based on the properties of $\Phi_{\mathrm{AAR}}$ given in App. \ref{app:qc}, $\Phi_{\mathrm{AAR}}$ ranges from $0$ to $1$ as $q$ goes from $0$ to $1$. Now study $F(q_{\mathrm{AAR}}) \triangleq \Phi_{\mathrm{AAR}}(\gamma_i,q_{\mathrm{AAR}}) - g^{-1}(q_{\mathrm{AAR}})$. The function $F(q_{\mathrm{AAR}})$ is a continuous and differentiable function in the interval of $q \in [0,1]$. A point such that $F(q_{\mathrm{AAR}})=0$ is a fixed point for this set of equations. Based on the mean value theorem \cite{meanvalue}, and from the limits $\lim_{q \to 0} F(q_{\mathrm{AAR}}) \leq 0$ and $\lim_{q \to 1} F(q_{\mathrm{AAR}}) \geq 0$, we have $F(q_{\mathrm{AAR}})=0$ for some $q \in [0,1]$. Also note that $F(q)$ is strictly increasing and so the point where $F(q_{\mathrm{AAR}})=0$ is unique.

The fixed point equation can be solved as:
\begin{equation}
\label{eq:tcpfpe}
g^{-1}(q_{\mathrm{AAR}}(\gamma_i(\ul{p}))) = [1-f(\gamma_i(\ul{p}))] \frac{ \omega_i(\gamma_i(\ul{p}))^K} {\sum_{j=0}^K \omega_i(\gamma_i(\ul{p}))^j }
\end{equation}
and has a unique solution.
\end{proof}

\emph{Remark 3.} For $b>0$, it can be seen that for any $X\in \{\mathrm{CAR}, \mathrm{AAR}\}$, $q_{\mathrm{X}} \rightarrow 1 \Rightarrow \omega_{\mathrm{X}} \rightarrow +\infty \Rightarrow
\Pi_{\mathrm{X}} \rightarrow 1 \Rightarrow \Phi_{\mathrm{X}} \rightarrow 1-f$, which means that one falls into the framework of \cite{betz}.

\emph{Remark 4.} When the packet arrival is constant (i.e., $\mathrm{X} = \mathrm{CAR}$), the dependency of $\Pi_{\mathrm{X}}$ regarding the SINR follows a simple relation. However, when the AAR protocol is assumed, the relationship is less trivial. Indeed, the packet loss $\Phi_{\mathrm{X}}$ depends on $\omega_{\mathrm{X}}$ through (\ref{eq:phi}) and (\ref{eq:pi}). The quantity $\omega_{\mathrm{X}}$ depends on the arrival rate  $q_{\mathrm{X}}$. But, in the AAR case, $q_{\mathrm{X}}$ also depends on the packet loss. This is the reason why we assume that, under the AAR protocol assumption, each transmitter operates at the fixed point associated with the aforementioned dependency chain. Therefore, this amounts to fixing the packet loss function to have a certain form. AAR can thus be seen as an indirect way of imposing a certain QoS on the transmission. To be more specific, if one assumes an arrival rate process which can be approximated as in \cite{Padhye98} (namely, $g(\Phi) = \frac{\kappa}{\sqrt{\Phi}}$) and the regime of large buffer size $K\rightarrow \infty$, the operating packet arrival rate function can be shown to be~:
\begin{equation}
 \lim_{K \to \infty} q_{\mathrm{AAR}}(\gamma_i) = f(\gamma_i) \frac{1+ \sqrt{1+ 4 \left( \frac{\kappa}{f(\gamma_i) } \right)^2}}{2}.
\end{equation}

\emph{Remark 5.} In the above equations, we have implicitly made a symmetry
 assumption: the efficiency and arrival rate functions are assumed to be identical for all users. This choice allows one to gain in terms of clarity while the extension to the non-symmetric case is ready. For the same reason, the gross data rates at which the users transmit $R_i, i\in \mc{N}$, have been assumed to be equal (to $R$ bit/s).

\emph{Remark 6.} As the form of the performance metric under consideration implicitly indicates (see (\ref{eq:eta})), the choice made in this paper is not to account for possible memory effects which would be due e.g., to correlated channel realizations from block to block or the state of the queue. This choice is coherent with the related literature on EE which originates from \cite{goodman-pc-2000} and the merit of it is that the corresponding power control policies remain distributed in the sense of the required knowledge to implement it. As seen in Sec. \ref{sec:equilibrium-analysis}, a transmitter only needs to know its instantaneous SINR to tune its power level on the current block and therefore manage EE and created interference. Exploiting stochastic models can be seen as a relevant extension of the present paper which would lead to a better performance (provided all the additional parameters required are well estimated) but at the expense of obtaining power control policies which are (possibly much) more demanding computationally and requiring (possibly much) more information (see e.g., \cite{letreust-isccsp-2010}\cite{destounis}). Summarizing, the proposed approach can be seen as a reasonable tradeoff between performance gain in terms of EE and ease of implementation.

\subsection{Properties}\label{sec:properties}

In order to obtain more insights about the impact of having a buffer on energy-efficiency, we now briefly analyze the case of CAR. The following result holds.

\begin{proposition} Let $\mathrm{X}=\mathrm{CAR}$. For all $i\in \mc{N}$, the EE function $\eta_{i,\mathrm{X}}$ is a strictly increasing function of the parameter $q$.
\end{proposition}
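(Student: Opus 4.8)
The plan is to exploit the fact that, for fixed $\ul{p}$, the parameter $q$ enters (\ref{eq:eta}) only through the single combination $\theta := q\left[1-\Phi_{\mathrm{CAR}}(\gamma_i)\right]$, because the SINR $\gamma_i$, the efficiency $f(\gamma_i)$, the radiated power $p_i$ and the fixed cost $b$ are all independent of $q$. Rewriting (\ref{eq:eta}) as $\eta_{i,\mathrm{CAR}}=R\,\theta/\bigl(b+p_i\theta/f(\gamma_i)\bigr)$ and differentiating in $\theta$ gives
\begin{equation}
\frac{\partial \eta_{i,\mathrm{CAR}}}{\partial \theta}=\frac{Rb}{\bigl(b+p_i\theta/f(\gamma_i)\bigr)^2}>0 ,
\end{equation}
which is strictly positive precisely in the cross-layer regime $b>0$ of interest (for $b=0$ one recovers $Rf/p_i$, independent of $q$, so strictness would fail). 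Hence it suffices to prove that the \emph{effective throughput} $\theta$ is strictly increasing in $q$.

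First I would transfer the problem to the load factor $\omega:=\omega_{\mathrm{CAR}}(\gamma_i)=\frac{q(1-f)}{(1-q)f}$ of (\ref{eq:omega}). Since $q\mapsto q/(1-q)$ is strictly increasing on $[0,1)$, $\omega$ is a strictly increasing function of $q$, so it is enough to show $\mathrm{d}\theta/\mathrm{d}\omega>0$. The delicate point is that, as $\omega$ grows, $\Pi_{\mathrm{CAR}}=\omega^K/\sum_{j=0}^K\omega^j$ increases (write it as $1/\sum_{m=0}^K\omega^{-m}$), so $1-\Phi_{\mathrm{CAR}}=1-(1-f)\Pi_{\mathrm{CAR}}$ \emph{decreases}; thus $\theta=q\,(1-\Phi_{\mathrm{CAR}})$ is a product of an increasing and a decreasing factor and monotonicity is not automatic.

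The key computation is the derivative. Denoting by $\pi_j=\omega^j/S$ the stationary law ($S:=\sum_{j=0}^K\omega^j$) and $\mu:=\sum_{j=0}^K j\,\pi_j$ its mean occupancy, a logarithmic differentiation yields $\Pi_{\mathrm{CAR}}'=\Pi_{\mathrm{CAR}}(K-\mu)/\omega$; combining this with $q=\omega f/D$ and $\mathrm{d}\omega/\mathrm{d}q=D^2/[f(1-f)]$, where $D:=1-f+f\omega$, I expect all the awkward factors to cancel and leave
\begin{equation}
\frac{\mathrm{d}\theta}{\mathrm{d}q}=1-\Pi_{\mathrm{CAR}}\left[(1-f)+D\,(K-\mu)\right].
\end{equation}
The whole statement then reduces to the single inequality $\Pi_{\mathrm{CAR}}\left[(1-f)+D\,(K-\mu)\right]<1$.

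The main obstacle is this last inequality, which is genuinely delicate because it is \emph{tight} as $\omega\to\infty$ (the throughput saturates at $f$, so $\mathrm{d}\theta/\mathrm{d}q\to0$). I would clear denominators by multiplying through by $S^2$ and reduce it to the polynomial positivity
\begin{equation}
P(\omega):=S^2-(1-f)\,\omega^K S-D\,\omega^K T>0,\qquad T:=\sum_{j=0}^{K-1}(K-j)\,\omega^j .
\end{equation}
The plan here is simply to read off coefficients: the two subtracted terms affect only powers $\omega^m$ with $m\geq K$, and a short calculation shows that for every $K+1\leq m\leq 2K$ the coefficient of $\omega^m$ cancels \emph{exactly} (the analytic counterpart of saturation), leaving
\begin{equation}
P(\omega)=\sum_{m=0}^{K-1}(m+1)\,\omega^m+(K+1)f\,\omega^K ,
\end{equation}
whose coefficients are all strictly positive. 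Thus $P(\omega)>0$ for all $\omega\geq0$, giving $\mathrm{d}\theta/\mathrm{d}q>0$ and hence $\mathrm{d}\eta_{i,\mathrm{CAR}}/\mathrm{d}q>0$, which is the claim. I would verify the general cancellation on the case $K=1$, where it collapses to the transparent $P(\omega)=1+2f\omega>0$, before trusting the coefficient bookkeeping in general.
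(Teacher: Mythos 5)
Your proof is correct, and it takes a genuinely different --- and in fact more careful --- route than the paper at the decisive step. Both arguments begin with the same reduction: since $q$ enters $\eta_{i,\mathrm{CAR}}$ only through the effective throughput $q\left[1-\Phi_{\mathrm{CAR}}\right]$ and $b>0$, it suffices to show that $\frac{\partial}{\partial q}\, q\left[1-\Phi_{\mathrm{CAR}}\right]>0$, i.e., that $\frac{\partial \Phi_{\mathrm{CAR}}}{\partial q} < \frac{1-\Phi_{\mathrm{CAR}}}{q}$. The paper then disposes of this condition by asserting $\frac{\partial \Phi_{\mathrm{CAR}}}{\partial q}<0$, obtained from a differentiation of $\sum_{j=0}^K \omega_{\mathrm{CAR}}^{-j}$ in which a minus sign is dropped: since $\frac{\partial \omega_{\mathrm{CAR}}}{\partial q}>0$, one actually has $\frac{\partial}{\partial q}\sum_j \omega_{\mathrm{CAR}}^{-j}<0$, hence $\Pi_{\mathrm{CAR}}$ and therefore $\Phi_{\mathrm{CAR}}$ are \emph{increasing} in $q$ (a fuller buffer loses more packets), so the soft sign argument does not close the proof. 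You correctly identify this as the delicate point --- the throughput is a product of an increasing and a decreasing factor, tight as $q\to 1$ --- and you establish the required quantitative bound. I checked your computation: $\Pi_{\mathrm{CAR}}'=\Pi_{\mathrm{CAR}}(K-\mu)/\omega$, $\mathrm{d}\omega/\mathrm{d}q=D^2/[f(1-f)]$ and the resulting expression $\mathrm{d}\theta/\mathrm{d}q=1-\Pi_{\mathrm{CAR}}\left[(1-f)+D(K-\mu)\right]$ are all correct, and the coefficient bookkeeping does work out: for $K+1\leq m\leq 2K$ the coefficient of $\omega^m$ in $P$ is $(K-l+1)-(1-f)-(1-f)(K-l)-f(K-l+1)=0$ with $l=m-K$, leaving $P(\omega)=\sum_{m=0}^{K-1}(m+1)\omega^m+(K+1)f\omega^K>0$. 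So your argument not only proves the proposition but repairs the gap in the published proof; the only cosmetic additions I would make are to state explicitly that $b>0$ and $q\in(0,1)$, $f(\gamma_i)>0$ are assumed (the boundary cases follow by continuity, and strictness genuinely fails at $b=0$, as you note).
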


\begin{proof}
Let $\ul{p}$ be fixed and remove the dependency toward $\ul{p}$ and $\gamma_i$ from the notations. The EE function can be rewritten as $\eta_{i,\mathrm{CAR}} = \frac{1}{\frac{b}{(1-\Phi_{\mathrm{CAR}})q} + \frac{p_i}{f}}$. Clearly, if the sufficient condition $\frac{\partial\Phi_{\mathrm{CAR}}}{\partial q} < \frac{1-\Phi_{\mathrm{CAR}} }{q}$ holds, then $\frac{\partial}{\partial q} (1-\Phi_{\mathrm{CAR}})q > 0 $. From this, it follows that $\frac{\partial \eta_{i,\mathrm{CAR}}}{\partial q} > 0$. Let us prove the sufficient condition. The derivative $\frac{\partial\Phi_{\mathrm{CAR}}  }{\partial q}$ can also be written as $\frac{\partial\Phi_{\mathrm{CAR}}}{\partial q}=-\Phi_{\mathrm{CAR}}^2\frac{\partial(\Phi_{\mathrm{CAR}}^{-1})}{\partial q}$ with $\Phi_{\mathrm{CAR}}^{-1}=1+\frac{1}{\omega_{\mathrm{CAR}}} +..+\frac{1}{\omega_{\mathrm{CAR}}^K}$. Using
\begin{equation}
\frac{\partial\omega_{\mathrm{CAR}}  }{\partial q} = \frac{1-f}{f} \frac{1}{(1-q)^2}
\end{equation}
implies that
\begin{equation}
\frac{\partial\Phi_{\mathrm{CAR}}^{-1}}{\partial q}=\left(\frac{1}{\omega_{\mathrm{CAR}}} +..+\frac{K}{\omega_{\mathrm{CAR}}^{K}} \right)\frac{1}{q(1-q)} > \frac{1}{\Pi_{\mathrm{CAR}} q (1-q)}.
\end{equation}
The sufficient condition follows by using $\frac{1}{\Pi_{\mathrm{CAR}} q (1-q)}>0$ and thus $\frac{\partial\Phi_{\mathrm{CAR}}  }{\partial q} <0< \frac{1-\Phi_{\mathrm{CAR}}}{q}$.
\end{proof}

This proposition mathematically translates the following intuition. If the packet arrival rate $q$ decreases, the average duration during which the buffer is empty increases. Since there is a fixed transmission cost $b$, this induces a waste of energy.

To conclude this section, let us analyze the limit of large buffer size. Two sub-cases can be distinguished.
\begin{itemize}
\item Case 1: $ q_{\mathrm{X}} > f(\gamma_i(\ul{p}))$, i.e., $\omega_{\mathrm{X}}  >1$. We have that the steady-state probability of having a full buffer $\ds{\lim_{K \to \infty}} \Pi_{\mathrm{X}} = \frac{\omega_{\mathrm{X}}  -1}{\omega_{\mathrm{X}}} $ and a simplification yields $\Phi_{\mathrm{X}} = 1-\frac{f(\gamma_i(\ul{p}))}{q_{\mathrm{X}}}$. Thus the EE becomes $\ds{\lim_{K \to \infty}} \eta_{i,\mathrm{X}}(\ul{p})= \frac{R f(\gamma_i(\ul{p}))}{b + p_i}$. This means that a higher probability of entrance than exit causes the queue size to blow up, and there are always packets to be transmitted, which explains why one falls into the framework of \cite{betz} in Case 1.

\item Case 2: $q_{\mathrm{X}} \leq f(\gamma_i(\ul{p}))$, i.e., $\omega_{\mathrm{X}}  \leq 1$. If $f(\gamma_i(\ul{p}))=q_{\mathrm{X}}$, then $\Pi_{\mathrm{X}} = \frac{1}{K} $ and $\ds{\lim_{K \to \infty}} \Pi_{\mathrm{X}} =0$. For $f(\gamma_i(\ul{p}))>q_{\mathrm{X}}$, we have also that $\ds{\lim_{K \to \infty}} \Pi_{\mathrm{X}} =0 $ and then simplification yields $\Phi_{\mathrm{X}} \to 0$. Thus the EE becomes $\ds{\lim_{K \to \infty}} \eta_{i,\mathrm{X}}(\ul{p})= \frac{R}{\frac{b}{q_{\mathrm{X}}} + \frac{p_i}{f(\gamma_i(\ul{p}))}}$. This means that, even with the fixed consumption cost $b$, the EE performance metric to be optimized becomes $\frac{f(\gamma_i(\ul{p}))}{p_i}$ (i.e., the same metric as \cite{goodman-pc-2000}). This is also quite intuitive, as in the steady state, due to a higher probability of exit, the buffer is never full and there is no packet loss due to buffer overflow.
\end{itemize}

\emph{Remark 7.} The above special case analysis suggests that, in the regime of large buffer size, the power control policies may be obtained from an approximated payoff function which is simpler than the exact expression (\ref{eq:eta}). It is seen that, depending on the current value of the SINR and arrival rate, the approximated payoff function coincides either with that of \cite{goodman-pc-2000} or \cite{betz}.

\subsection{QoS constraint}

To conclude this part, we will mention how the QoS constraint is treated in our analysis. As already mentioned in Sec. I, one of the recurrent problems with most works using the performance metric introduced in \cite{goodman-pc-2000} is that EE can be maximized at a power level which does not guarantee a minimum QoS. This is why, in the case of CAR, we also consider a constraint when maximizing (\ref{eq:eta}): the packet loss rate $\Pi_{\mathrm{CAR}} [1-f(\gamma_i)]$ has to be less than an upper bound $\epsilon$. For example, in cellular systems, typical values for $\epsilon$ are $0.1$ or $0.01$, based on the system requirements. Adding this constraint restricts the range of power usable by the transmitter by adding a lower bound on the power. This lower bound depends on the entry probability $q$ and on the size of the queue $K$. At last, our choice is not to impose this constraint for the AAR protocol since, by construction, this protocol aims at automatically adapting the packet arrival rate to congestion.

\section{Equilibrium analysis and distributed power control algorithm}
 \label{sec:equilibrium-analysis}

Since it is assumed that transmitter $i$, $i\in \mc{N}$, can only control the variable $p_i$ of the $N-$variable function $\eta_{i,\mathrm{X}}(\ul{p})$ and is assumed to consider the energy-efficiency of his own communication, the power control problem is naturally distributed in terms of the decision. The ultimate goal of this section is to propose a power control algorithm which is distributed both in terms of the decision and information (only individual SINR feedback is required to adapt the power level). While the algorithm itself is directly inspired from existing works, its convergence analysis does not follow from a direct adaptation of existing results. We first begin by an equilibrium analysis of two non-cooperative games associated with the two considered power control scenarios (CAR and AAR), before proposing the algorithm and proving its convergence.

\subsection{Equilibrium analysis of the associated games}

A non-cooperative game under strategic form is merely given by an ordered triplet (see e.g., \cite{lasaulce-book}). With the notations of this paper it writes as
\begin{equation}
\mc{G}_{\mathrm{X}} = \left(\mc{N}, \left\{ \mc{P}_i \right\}_{i \in \mc{N}}, \left\{ u_{i,\mathrm{X}} \right\}_{i \in \mc{N}}      \right)
\end{equation}
where the set of decision-makers (DMs) or players is therefore the set of transmitters, the action space for DM $i$ is $\mc{P}_i = [0, P_{\max}]$, and $u_{i,\mathrm{X}}$ is the payoff function of DM $i$ when the arrival rate model is $X$. As explained in Sec. II, when CAR is assumed, a QoS constraint is imposed on the packet loss. Under this assumption, the payoff function is chosen to be:

\begin{equation}
u_{i,\mathrm{CAR}}(\ul{p}) = \left|
\begin{array}{cc}
\eta_{i, \mathrm{CAR}}(\ul{p}) & \ \mathrm{if}  \ \  \Phi_{\mathrm{CAR}}(\gamma_i(\ul{p})) \leq \epsilon\\
 \theta_i(\ul{p})       &    \mathrm{otherwise}
\label{eq:utiludp}
\end{array}.
\right.
\end{equation}
Here, $\theta_i(\ul{p})$ is a non-decreasing function of $p_i$ which satisfies: $\forall \ul{p}\in [0,P_{\max}]^K, \theta_i(\ul{p}) \leq \eta_{i, \mathrm{CAR}}(\ul{p})$. For example, $\theta_i(\ul{p}) = \ds{ R \frac{q\left[1-\Phi_{\mathrm{CAR}}(\gamma_i(\ul{p})) \right]}{b +P_{\max}}}$ is an appropriate choice. The latter choice has the important advantage of preserving the quasi-concavity property for the payoff (as proved in Appendix A). The corresponding payoff also represents an energy-efficiency measure and is measurable in many real systems (e.g., by counting the number of ACK/NACK). Elaborating further on the practical motivations, this choice also corresponds to an envisioned scenario for future 5G cellular systems. Therein, the first concern will likely be to guarantee a certain QoS to the user. Energy will probably appear as a secondary concern. Therefore, as long as QoS can be ensured, the communication terminal optimizes energy consumption or energy-efficiency. If the QoS target cannot be reached, then the goal is to make the service as robust (minimize outage or maximize the goodput). Naturally, the QoS constraint may not be satisfied even if goodput maximization is pursued. Usually, the corresponding feasibility analysis has to be conducted offline e.g., by the operator who deploys his network. Since meeting the constraint amounts to reaching a certain level in terms of SINR (note that $\Phi_{\mathrm{CAR}}$ is invertible), the feasibility analysis falls into a literature which is broader than the one on energy-efficient power control and corresponds to a problem which is not specific to the work presented here (for more details see e.g., \cite{Bacelli}). Some numerical results are provided in Sec. \ref{sec:numerical-results} to provide some information about feasibility.

For the AAR case, the payoff function is simply defined by
\begin{equation}
u_{i,\mathrm{AAR}}(\ul{p}) = \eta_{i, \mathrm{AAR}}(\ul{p}).
\end{equation}

A fundamental solution concept for a non-cooperative game is the Nash equilibrium. There are at least two important reasons for this. When operating at an NE, a network possesses a form of strategic stability: a transmitter which changes his power control policy while the others keep on using the equilibrium policies, will see this payoff decreased or maintained in the best case. The second reason is that, under some conditions, important iterative distributed optimization algorithms such as the sequential best-response dynamics (called sequential iterative water-filling in the literature of distributed power allocation whose objective is to maximize the transmission rate) converge to an NE. It turns out that the two games under study verify a simple sufficient condition which allows the second feature to be exploited. All of this gives us a strong reason for conducting the equilibrium analysis for the two defined games in order to show that the two games above possess an equilibrium, to prove that it is unique and to provide a simple distributed optimization algorithm which converges to it. This is the purpose of the following propositions which follow the definition of a Nash equilibrium in our context.

\begin{definition} The vector of transmit power levels $\ul{p}_{\mathrm{X}}^{\mathrm{NE}}$ is a pure Nash equilibrium of the game $\mc{G}_{\mathrm{X}}$ if:
\begin{equation}
\forall i \in \mc{N}, \ \forall p_i \in \mc{P}_i, \ u_{i,\mathrm{X}}(\ul{p}^\mathrm{NE})
\geq u_{i,\mathrm{X}}(p_i, \ul{p}_{-i,\mathrm{X}}^{\mathrm{NE}}).
\end{equation}
\end{definition}

\begin{proposition} For $X \in \left\{ \mathrm{CAR}, \mathrm{AAR} \right\}$, the game $\mc{G}_{\mathrm{X}}$ admits at least one pure Nash equilibrium.
\end{proposition}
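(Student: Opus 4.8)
The plan is to verify, for each protocol, the hypotheses of an appropriate equilibrium existence theorem on the compact convex profile space $\prod_{i\in\mc{N}}\mc{P}_i=[0,P_{\max}]^N$. Since the two games differ essentially in whether the payoff is continuous, I would treat them separately. The AAR game is a continuous game and falls under the classical Debreu--Glicksberg--Fan theorem, whereas the CAR game has a genuine discontinuity at the QoS boundary and must be handled by a fixed-point result for discontinuous games, namely the one of \cite{dasgupta-1986}. In both cases the quasi-concavity of the payoff, which is the non-trivial analytical input, is supplied by the appendix, so the work in this proof reduces to the topological properties of the payoffs.

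For the AAR case, each $\mc{P}_i=[0,P_{\max}]$ is nonempty, compact, and convex, so it remains to establish that $u_{i,\mathrm{AAR}}=\eta_{i,\mathrm{AAR}}$ is continuous in $\ul{p}$ and quasi-concave in $p_i$. Continuity of $\gamma_i$ in $\ul{p}$ is immediate from \eqref{eq:gamma} because $\sigma^2>0$ keeps the denominator bounded away from zero; the efficiency function $f$ and the loss function $\Phi_{\mathrm{AAR}}$ are continuous in $\gamma_i$, and the arrival rate $q_{\mathrm{AAR}}(\gamma_i)$ is continuous since Proposition~1 identifies it as the unique fixed point of a smooth equation, so the implicit function theorem gives continuous dependence on $\gamma_i$. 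Composing these facts shows $\eta_{i,\mathrm{AAR}}$ is continuous in $\ul{p}$, while quasi-concavity with respect to $p_i$ is exactly the property established in the appendix. The Debreu--Glicksberg--Fan theorem then yields a pure Nash equilibrium.

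For the CAR case, the only obstruction is that $u_{i,\mathrm{CAR}}$ switches between $\eta_{i,\mathrm{CAR}}$ and $\theta_i$ across the constraint set $\{\Phi_{\mathrm{CAR}}(\gamma_i)\le\epsilon\}$, so it is not continuous. I would first translate the constraint into a one-sided bound on $p_i$: since $\Phi_{\mathrm{CAR}}$ is decreasing in $\gamma_i$ and $\gamma_i$ is increasing in $p_i$, the feasible region takes the form $p_i\ge p_{\min}(\ul{p}_{-i})$, with the threshold defined by $\Phi_{\mathrm{CAR}}=\epsilon$. At $p_i=p_{\min}$ the payoff equals $\eta_{i,\mathrm{CAR}}$, whereas just below it drops to $\theta_i\le\eta_{i,\mathrm{CAR}}$; the jump is thus downward as $p_i$ decreases through $p_{\min}$, which makes $u_{i,\mathrm{CAR}}$ upper semi-continuous in $\ul{p}$ and confines its discontinuities to the lower-dimensional locus $\{\Phi_{\mathrm{CAR}}(\gamma_i)=\epsilon\}$. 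Quasi-concavity in $p_i$ is retained: on $[0,p_{\min})$ the payoff equals the non-decreasing $\theta_i$, it jumps up at $p_{\min}$, and on $[p_{\min},P_{\max}]$ it equals the single-peaked $\eta_{i,\mathrm{CAR}}$, so the glued map is unimodal, as shown in the appendix. With compact convex action spaces, upper semi-continuity of the aggregate payoff, a well-structured discontinuity locus, and own-action quasi-concavity all in place, the existence theorem of \cite{dasgupta-1986} applies and delivers a pure Nash equilibrium.

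I expect the CAR analysis to be the main obstacle, precisely because one cannot merely invoke continuity but must match the exact regularity hypotheses of \cite{dasgupta-1986}. Concretely, the delicate points are verifying that the discontinuity set is of the required well-behaved type, that $\sum_{i\in\mc{N}} u_{i,\mathrm{CAR}}$ is upper semi-continuous, and that the specific choice of $\theta_i$ (non-decreasing in $p_i$ and dominated by $\eta_{i,\mathrm{CAR}}$) genuinely preserves quasi-concavity of the glued payoff. By contrast, the AAR part is comparatively routine once the continuity of the fixed-point map $q_{\mathrm{AAR}}(\gamma_i)$ and the quasi-concavity of $\eta_{i,\mathrm{AAR}}$ are secured.
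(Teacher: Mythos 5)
Your overall strategy is the same as the paper's — quasi-concavity supplied by the appendix, a classical theorem for the continuous AAR game, and Dasgupta--Maskin for the discontinuous CAR game — but there is a concrete gap in the CAR half. The hypotheses you propose to verify there (upper semi-continuity of the \emph{sum} $\sum_i u_{i,\mathrm{CAR}}$ and a ``well-structured discontinuity locus'') belong to the Dasgupta--Maskin theorem for \emph{mixed}-strategy equilibria; combined with own-action quasi-concavity and upper semi-continuity they do not yield a pure equilibrium. The result the paper actually invokes is a corollary of Theorem~2 of \cite{dasgupta-1986}, whose additional hypothesis is that the value function $V_i(\ul{p}_{-i})=\max_{p_i}u_{i,\mathrm{CAR}}(p_i,\ul{p}_{-i})$ be \emph{lower} semi-continuous in $\ul{p}_{-i}$ (a graph-continuity-type condition that makes the best-response correspondence regular enough for the fixed-point argument). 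Your proposal never states or checks this condition, and it is precisely the non-trivial verification carried out in App.~\ref{app:gepsilon}: one shows that the unconstrained maximizer $p_i^*(\ul{p}_{-i})$ and the QoS threshold $p_i^+(\ul{p}_{-i})$ are continuous, that the best response switches continuously among the regimes $P_{\max}$, $p_i^+$ and $p_i^*$, and that $V_i$ can only jump downward (hence is l.s.c.) when $p_i^+(\ul{p}_{-i})$ reaches $P_{\max}$. Without this step the argument is incomplete: upper semi-continuity plus quasi-concavity guarantee that each best response is attained, but not that a pure equilibrium of a discontinuous game exists.

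The remainder of the proposal is sound and consistent with the paper. The AAR case via Debreu--Glicksberg--Fan is equivalent to what the paper does (it folds AAR into the same Dasgupta--Maskin statement, since continuity of $u_{i,\mathrm{AAR}}$ gives both upper semi-continuity and, by the maximum theorem, continuity of $V_i$), and your observation that continuity of $q_{\mathrm{AAR}}$ follows from the uniqueness of the fixed point in Proposition~1 is a legitimate way to justify continuity of the composite payoff. Likewise, your gluing argument for quasi-concavity of $u_{i,\mathrm{CAR}}$ — non-decreasing $\theta_i\le\eta_{i,\mathrm{CAR}}$ below $p_i^+$, an upward jump at $p_i^+$, and the single-peaked $\eta_{i,\mathrm{CAR}}$ above — is correct and is exactly the three-case analysis the paper performs.
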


\begin{proof} The proof is based on a fixed point theorem proved in \cite{dasgupta-1986}. The called theorem states that if the action spaces are compact convex sets, every payoff function of the game is upper semi-continuous w.r.t. the action profile ($\ul{p}$ in our context), and for any DM $i$ the payoff function is quasi-concave w.r.t. to the individual action ($p_i$ in our context), then the game possesses a pure Nash equilibrium. For $X \in \left\{ \mathrm{CAR}, \mathrm{AAR} \right\}$, the action space is $[0,P_{\max}]$ which is a compact convex space. When $X = \mathrm{CAR}$, $u_{i, \mathrm{X}}$ is upper semi-continuous w.r.t $p_i$ whereas it is continuous when $X = \mathrm{AAR}$. For all $i\in \mc{N}$, the EE function $\eta_{i,\mathrm{X}}(\ul{p})$ is quasi-concave w.r.t. $p_i$ and has a unique maximum point denoted by $p^*_i(\ul{p}_{-i})$. The proof for this relies, in particular, on the sigmoidness assumption for $f$ and can be found in App. \ref{app:qc}. The specific case of CAR is analyzed in App. \ref{app:qccar} and App. \ref{app:gepsilon}, and the AAR case is analyzed in App. \ref{app:qctcp}.

\end{proof}

To our knowledge, all related works on energy-efficient power control use payoffs which are continuous with the power profile $\ul{p}$. Interestingly, a relevant power control game in which continuity is not available can be exhibited for the case of $X= \mathrm{CAR}$.

\begin{proposition} For $X \in \left\{ \mathrm{CAR}, \mathrm{AAR} \right\}$, the game $\mc{G}_{\mathrm{X}}$ admits a unique pure Nash equilibrium, for which the equilibrium power policy will be denoted by $\ul{p}_{\mathrm{X}}^{\mathrm{NE}}$.
\end{proposition}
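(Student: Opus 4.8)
The plan is to establish uniqueness through the framework of \emph{standard interference functions}: existence already follows from the preceding proposition, so it remains to show that the best-response of each game is a single-valued standard function of the opponents' powers. A standard function admits exactly one fixed point (methodology reviewed in \cite{lasaulce-book}), and that fixed point is precisely the Nash equilibrium, so standardness settles uniqueness for both $\mathrm{CAR}$ and $\mathrm{AAR}$ at once.

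First I would fix a player $i$ and note that, by the quasi-concavity with a unique maximiser established in the previous proposition (App.~\ref{app:qc}), the best response $p_i^\star(\ul{p}_{-i}) = \arg\max_{p_i} u_{i,\mathrm{X}}(p_i,\ul{p}_{-i})$ is a genuine map $\mathrm{BR}_i$. The key observation is that $u_{i,\mathrm{X}}$ depends on $\ul{p}_{-i}$ only through the aggregate $n_i = \sigma^2 + \sum_{j\neq i} p_j g_{ji}$; writing $p_i = \gamma_i n_i/g_{ii}$ lets me view $\eta_{i,\mathrm{X}}$ as a function of the scalar $\gamma_i$ with $n_i$ as a parameter, so that $\mathrm{BR}_i = \gamma_i^\star(n_i)\, n_i/g_{ii}$. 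Under this change of variable the three defining properties of a standard function become: (i) \emph{positivity}, i.e. $\gamma_i^\star(n_i)>0$; (ii) \emph{monotonicity}, i.e. $n_i \mapsto \gamma_i^\star(n_i)\, n_i$ is strictly increasing; and (iii) \emph{scalability}, which—using (ii) together with the fact that scaling $\ul{p}_{-i}$ by $\alpha>1$ scales the interference but not the noise, an effect that only helps—reduces to $\gamma_i^\star(n_i)$ being strictly decreasing in $n_i$. Verifying these is the core of the argument.

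The main obstacle is the combination of (ii) and (iii). In the classical setting of \cite{goodman-pc-2000}, where $b=0$, the maximiser is the constant solution of $\gamma f'(\gamma)=f(\gamma)$, independent of $n_i$, which makes standardness immediate. The fixed cost $b>0$ couples $\gamma_i^\star$ to $n_i$, so I would instead write the first-order optimality condition defining $\gamma_i^\star(n_i)$, differentiate it implicitly in $n_i$, and use the structural properties of $f$, $q_{\mathrm{X}}$ and $\Phi_{\mathrm{X}}$ collected in App.~\ref{app:qc} (sigmoidness of $f$, monotonicity of $\Phi_{\mathrm{X}}$, and, for $\mathrm{AAR}$, the convex decreasing $g^{-1}$) to sign the derivatives. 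Quasi-concavity guarantees the relevant second-order quantity has the correct sign at $\gamma_i^\star$, which is what allows the implicit-function argument to conclude that $\gamma_i^\star$ decreases while $\gamma_i^\star n_i$ increases; I expect this sign analysis to be the most delicate step.

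Finally, the $\mathrm{CAR}$ case needs one extra ingredient because its payoff is only upper semi-continuous. Since $\Phi_{\mathrm{CAR}}$ is decreasing in $\gamma_i$, the constraint $\Phi_{\mathrm{CAR}}\le\epsilon$ is equivalent to a power floor $p_\epsilon(n_i)=\gamma_\epsilon n_i/g_{ii}$ that is affine and increasing in $n_i$, hence standard. Using that $\theta_i$ is non-decreasing and dominated by $\eta_{i,\mathrm{CAR}}$, a short case analysis (whether the unconstrained maximiser is feasible or not) shows the constrained best response is simply $\mathrm{BR}_i(\ul{p}_{-i})=\max\{\,p_i^{\star\star}(n_i),\,p_\epsilon(n_i)\,\}$, where $p_i^{\star\star}$ is the unconstrained maximiser handled above; since the pointwise maximum of standard functions is standard, $\mathrm{BR}_i$ remains standard. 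Having shown $\mathrm{BR}_i$ standard for every $i$ in both games, I would invoke the uniqueness of the fixed point of a standard function to conclude that $\mc{G}_{\mathrm{X}}$ admits a unique pure Nash equilibrium $\ul{p}_{\mathrm{X}}^{\mathrm{NE}}$.
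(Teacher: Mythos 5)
Your proposal is correct and follows essentially the same route as the paper: existence is taken from the previous proposition, and uniqueness is obtained by showing each best-response is a standard function of the interference (positivity, monotonicity, scalability) via the first-order condition in the SINR variable, then invoking the uniqueness of the fixed point of a standard map in the sense of \cite{yates}. The only cosmetic differences are that the paper signs the relevant quantities by directly substituting $\gamma_j^*/\lambda$ into the optimality condition rather than by implicit differentiation, and your explicit treatment of the QoS power floor as a pointwise maximum of standard functions is a slightly more complete account of the constrained CAR case than what appears in App.~B.
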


\begin{proof}
The proof of this result mainly relies on one important property of the studied games, namely both games are standard in the sense of \cite{yates}. In App. \ref{app:standard} we prove that the DMs' best-responses are always standard functions; by definition, the best-response of a DM $i$ to the (reduced) action profile $\ul{p}_{-i}$ is the set-valued function defined by $\mathrm{BR}_{i,\mathrm{X}}(\ul{p}_{-i})=\arg \ds{ \max_{p_i} } u_i(\ul{p})$. If the best-responses of all the DM's are standard, then the game is also standard, which completes our proof.
\end{proof}

\subsection{The proposed implementation of the best-response algorithm}

In this section, the process of implementing the well-known sequential best-response dynamics is proposed for the game under investigation. The standardness property of the best-responses, which is exploited to prove the previous proposition is also sufficient to guarantee convergence of some important distributed optimization algorithms. Note that the argmax set mentioned in the proof is a singleton (a scalar value), which can be checked from App. \ref{app:gepsilon} for CAR and App. \ref{app:qctcp} for AAR. While the standardness property is available for the scenario studied in \cite{goodman-pc-2000} and many related works, it is seen here that, although the proposed QoS oriented cross-layer approach leads us to more complex and more general payoffs, this property turns out to be valid in the more general case under investigation.

This means that for these algorithms, in addition to convergence being ensured, the convergence point is also known to be unique. This is very useful to characterize the performance of a distributed power control algorithm. Here, we only mention one of such algorithms: the asynchronous or sequential best-response dynamics. This algorithm is well-known in game theory \cite{Fudenberg-1991} and draws its roots from the paper by Cournot \cite{cournot-1838}. It has been used in \cite{goodman-pc-2000} and is often used because convergence to the NE can be guaranteed under simple conditions such as standardness. Thus, we don't provide here a new type of algorithms but rather fully describe its implementation in the more general setup of this paper. Let $\ul{p}_{\mathrm{X}}^{\mathrm{NE}}$ be the unique NE of $\mc{G}_{\mathrm{X}}$. For the algorithm, we define $\ul{p}^t$ as the power control policy in the previous time-slot, and $\ul{p}^{t+1}$ as the power control policy for the current time slot. Algorithm \ref{alg:br} implements the sequential best-response dynamics for $\mc{G}_X$:

\begin{algorithm}                      
\caption{Sequential best-response dynamics}          
\label{alg:br}                           
\begin{algorithmic}                    
    \State $\Delta \leftarrow 2 \delta $ \Comment{Initialize the observed difference in power levels over time, $\delta$ is the tolerance.}
    \State $\ul{p}^0\leftarrow (P_{\max},P_{\max},\dots,P_{\max}) $ \Comment{The starting power levels can be chosen arbitrarily.}
\State $t \leftarrow 0 $ \Comment{The starting time is 0.}
     \While{$ \Delta  \geq \delta$} \Comment{The outer loop that iterates till the power policies converge.}
  \For{$i = 1 \to N$} \Comment{The inner loop iterating over the DM indices.}
\State $\Gamma_i = \ds{\frac{\gamma_i(\ul{p}^{t})}{p_i^{t}}}$ \Comment{Using the SINR feedback from its receiver, DM i calculates the interference term $\Gamma_i$ for the previous time slot.}
\State $p^* \leftarrow \arg \max_p \left( \ds{\frac{Rq_\mathrm{X}(p\Gamma_i ) (1-\Phi_\mathrm{X}(p\Gamma_i ))}{b + \ds{\frac{p}{f(p \Gamma_i)}} q_\mathrm{X}(p\Gamma_i ) (1-\Phi_\mathrm{X}(p \Gamma_i)) } } \right)$ \Comment{Calculate the optimal power that maximizes the EE.}

\If{X$=$CAR}
\State $p_+ \leftarrow \min(p; \Phi_\mathrm{CAR}(p\Gamma_i ) \geq \epsilon)$ \Comment{Calculate the minimum power to satisfy the QoS constraint.}
\State $p_i^{t+1} \leftarrow \min (\max(p^*,p_+),P_{\max})$ \Comment{Choose the optimal power for CAR if less than $P_{\max}$ and more than $p_+$.}
\Else
\State $p_i^{t+1} \leftarrow \min (p^*,P_{\max})$ \Comment{Choose the optimal power for AAR if less than $P_{\max}$.}
\EndIf

\EndFor

$\Delta \leftarrow \max_i( |p_i^{t+1} - p_i^{t}|  )$
\State $t \leftarrow t+1$

    \EndWhile
\end{algorithmic}
\end{algorithm}

Several comments are in order.
\begin{enumerate}
\item To implement Algorithm 1, a certain coordination degree has to be assumed, which is classical in wireless papers in which best-response-type algorithms are used (which include all the related works on energy-efficient power control such as \cite{meshkatiCDMA06}--\cite{bacci-tsp-2013}, \cite{letreust-tvt-2013}, \cite{betz}, and \cite{zappone-twc-2013}). Concerning the order in which the transmitters update their power level, it has to be noted that this order can be arbitrary and even changed over time and it does not affect the algorithm convergence (see e.g., \cite{bertsekes-1995}). Although it is useful to have transmissions which are time-slotted and for which each transmitter updates its power over a given time-slot, the algorithm can also be implemented when time-slots have different durations. What matters is that only one transmitter updates its power at a given time instance. The corresponding type of best-response dynamic algorithm is referred to asynchronous or sequential best-response dynamic algorithms, which contrast with synchronous or simultaneous best-response algorithms (convergence is generally not guaranteed for the latter).   
\item To update the power levels $m$ times, a duration corresponding to $mN$ time-slots is required.
\item The quantity $\delta>0$ corresponds to the accuracy level wanted for the stopping criteria in terms of convergence to the NE. The initial power values of this algorithm can be arbitrarily chosen, however the choice of $P_{\max}$ is suggested to allow better convergence time as observed from simulations.
\item Algorithm \ref{alg:br} is distributed in the sense that to update his power, a DM only needs to know the SINR corresponding to his chosen power level, i.e., $\mathrm{BR}_{i,\mathrm{X}}(\ul{p}_{-i})$ can be calculated by knowing $\gamma_i$ for some $p_i$. This is typically achieved using just an SINR feedback mechanism and does not require a central entity that provides knowledge of the channel conditions or power levels chosen by the other DMs. Indeed, given the previous SINR of transmitter $i$, the new SINR (on time-slot $t+1$) is estimated as $\gamma_i^{t+1} = \gamma_i^t \frac{p^{t+1}_i}{p^t_i}$. DM $i$ can therefore plug $\gamma_i^{t+1}$ into its payoff expression and compute its best-response by maximizing it w.r.t. to the sole variable $p_i^{t+1}$, since $\gamma_i^t$ and $p_i^t$ are given.
    
\item As a minor remark, note that Algorithm 1 can be initialized arbitrarily. But, as many iterative algorithms, the choice of the initial point can have an impact in terms of convergence speed. Here, the choice of transmitting at full power can be verified (by simulations) to be statistically good in terms of convergence speed for the typical values of the parameters chosen in the numerical section.
\end{enumerate}
\section{Numerical results}
\label{sec:numerical-results}

\subsection{General setup}

Unless explicitly stated otherwise, the following choices and parameters
are assumed for all the simulations provided here:
\begin{itemize}
\item The number of users or transmitters is set to two ($N=2$). This scenario was chosen because the behavior of various metrics like the price of anarchy (PoA) can be easily analyzed in this situation. The case of ``high interference'', as defined below, is also studied to compensate for this choice. In addition, some specific figures also study the case with more interferers.
\item The block success rate function is chosen as in \cite{belmega-tsp-2010}: $f(\gamma_i)=\exp\left[ -\left(\frac{2^{\frac{R}{R_0}}-1}{\gamma_i}\right) \right]$ where $R_0= 1$ MHz is the bandwidth used and the gross data rate is $R=1$ bit/s.
\item When the adaptive arrival rate scenario is considered, it it is assumed that $g(\phi)=\frac{0.1}{\sqrt{\phi}}$.
\item We define the low (resp. high) interference scenario as:
$\mathbb{E}(g_{ii}) = 2.5$ and $\mathbb{E}(g_{ij}) = 0.5$ for $j\neq i$ (resp. $\mathbb{E}(g_{ii}) = 2.5$ and $\mathbb{E}(g_{ij}) = 2$ for $j\neq i$). For some simulations, the channel gains will be assumed to be fixed while for the others it will follow classical block Rayleigh fading. The values indicated will be the instantaneous channel fading when the scenario considered is static and otherwise will indicate the variance.
\item The noise level is set to $\sigma^2 = 1$ mW; the maximum power $P_{\max}=1000$ mW; buffer size of $K=10$; $\epsilon=1$ (packet loss constraint) and the fixed power consumption $b=1000$ mW.
\item To measure the global efficiency of the interference network with respect to the centralized solution, we use the price of anarchy. \cite{Papadimitriou} gave a definition of PoA where the optimal situation corresponds to a PoA which equals $1$, while other situations correspond to a PoA$>1$:
\begin{equation}
\forall \mathrm{X} \in \{\mathrm{CAR}, \mathrm{AAR}\}, \
\mathrm{PoA}_{\mathrm{X}} =  \frac{ \displaystyle{\max_{\ul{p}}} \sum_i u_{i,\mathrm{X}}(\ul{p}) }{\displaystyle{\sum_i} u_{i,\mathrm{X}}(\ul{p}^{\mathrm{NE}})  }.
\end{equation}
\end{itemize}

\subsection{About the considered EE performance metric}

Here we assume a single-user scenario i.e., $N=1$, a fixed channel gain (namely $g_{11} = 2.5$), and the arrival rate to be fixed (CAR scenario). Fig. \ref{fig:EESU} depicts the EE (\ref{eq:eta}) as a function of the chosen radiated power for different values of the fixed consumption cost $b$ and packet arrival rate $q$. First, the figure illustrates what has already been proved through Prop. 3.2 namely, EE is quasi-concave w.r.t. the radiated power. Second, we fix $q$ to one and assess the influence of $b$. As $b$ increases from $0$ to $4000$ mW, the curve becomes less peaky. In fact, if $b$ becomes very high, EE tends to merely becomes a packet success rate function. This means that power control becomes irrelevant since it merely boils down to transmitting at maximum power whatever the channel conditions. Now we fix $b$ to $1000$ mW. By moving from the arrival rate of $q=1$ (framework of \cite{betz}) to $q=0.6$ (with a buffer size of $K=10$), it is seen that the EE curve is quite significantly changed and the optimal radiated power changes from $460$ mW to $320$ mW. In the next section, the gain in terms of radiated power brought by the cross-layer approach is quantified in a more general scenario.

 \begin{figure}[h]
      \begin{center}
        \includegraphics[width=90mm]{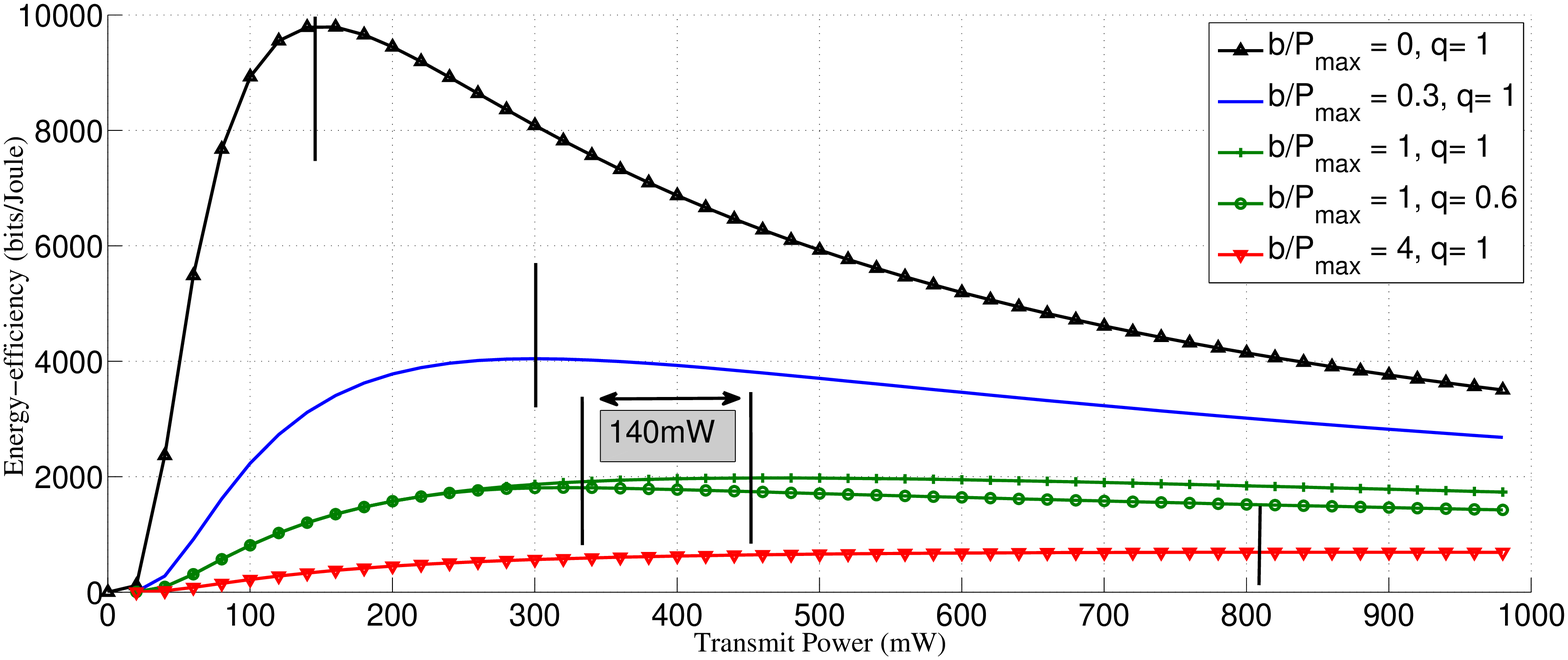}
    \end{center}
    \caption{CAR: EE against $p_1$, i.e., the energy efficiency as a function of the transmit power for various values of the constant power ($b$) and packet arrival rate ($q_{CAR}$).}
    \label{fig:EESU}
\end{figure}

\subsection{Influence of the packet arrival rate in the CAR scenario}

Here we assume the low interference scenario. For $K=10$, Fig. \ref{fig:gainudp} represents the gains in dB in terms of radiated power which is brought by the proposed cross-layer approach (after convergence of the proposed distributed power control algorithm) w.r.t. the conventional approach in which it is (implicitly) assumed that $q \rightarrow 1$ \cite{betz}. The gain is therefore defined by $10\log_{10} \left( \ds{\frac{p_{i}^{\mathrm{NE}}[q \to 1]}{p_{i}^{\mathrm{NE}}[q]}}  \right)$, for a given $i \in \{1,2,3\}$, say $i=1$ (the gain is the same for the different transmitters since the average channel gains are identical). The gain is represented as a function of the packet arrival rate. It is seen that, for different numbers of transmitter-receiver pairs ($N=2$ or $N=3$) and a raw packet error rate of $\epsilon =0.1$ (by raw it is meant before re-transmission), the gain is significant if the arrival rate is typically less than $0.5$. Gains as high as $10$ dB (with $N-1=2$ interfering users on the same band) or $30$ dB (with $N-1=1$ interfering user on the same band). If the raw QoS constraint is relaxed ($\epsilon=1$), quite similar observations can be made. These gains are not in terms of energy consumed by the whole transmit device but they mean that transmitters use much less radiated power and therefore create much less interference, while reaching the same QoS.

 \begin{figure}[h]
      \begin{center}
        \includegraphics[width=90mm]{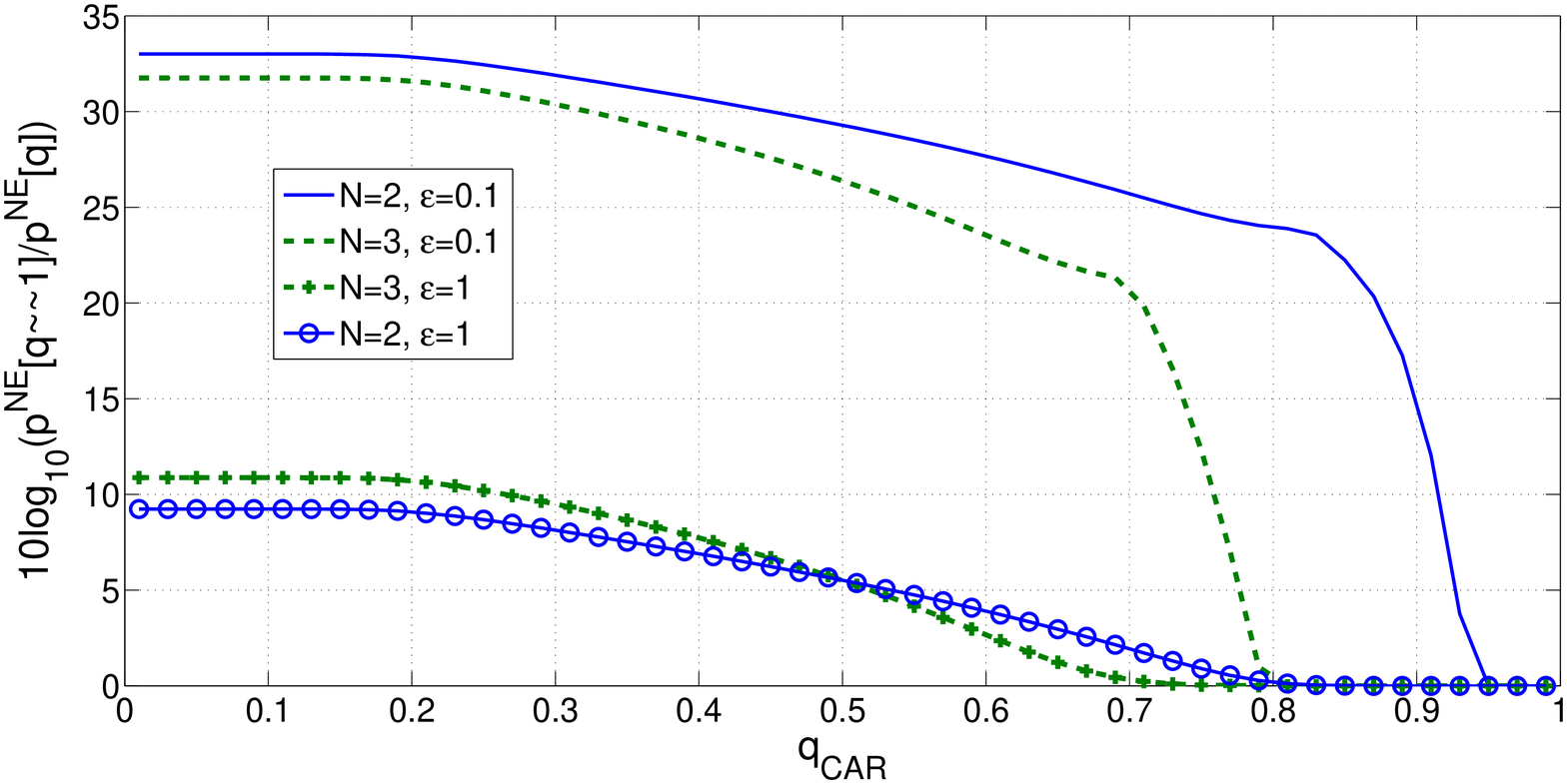}
    \end{center}
    \caption{CAR: $10\log_{10} \left( \ds{\frac{ p_{1}^{\mathrm{NE}}[q \to 1]} {p_{1}^{\mathrm{NE}}[q]}}  \right)$ against $q$, i.e., the ratio of equilibrium power levels in the cross-layer case to the case where the buffer is ignored and arrival rate is one. Interestingly, our cross-layer approach does not only allow the EE to be maximized but also allows significant gains in terms of radiated power. The transmit power for the cross-layer approach is always lower than for the purely physical layer approach, and this difference is more prominent when a packet loss constraint is imposed.}
    \label{fig:gainudp}
\end{figure}

\begin{figure}[h]
    \begin{center}

  \includegraphics[width=90mm]{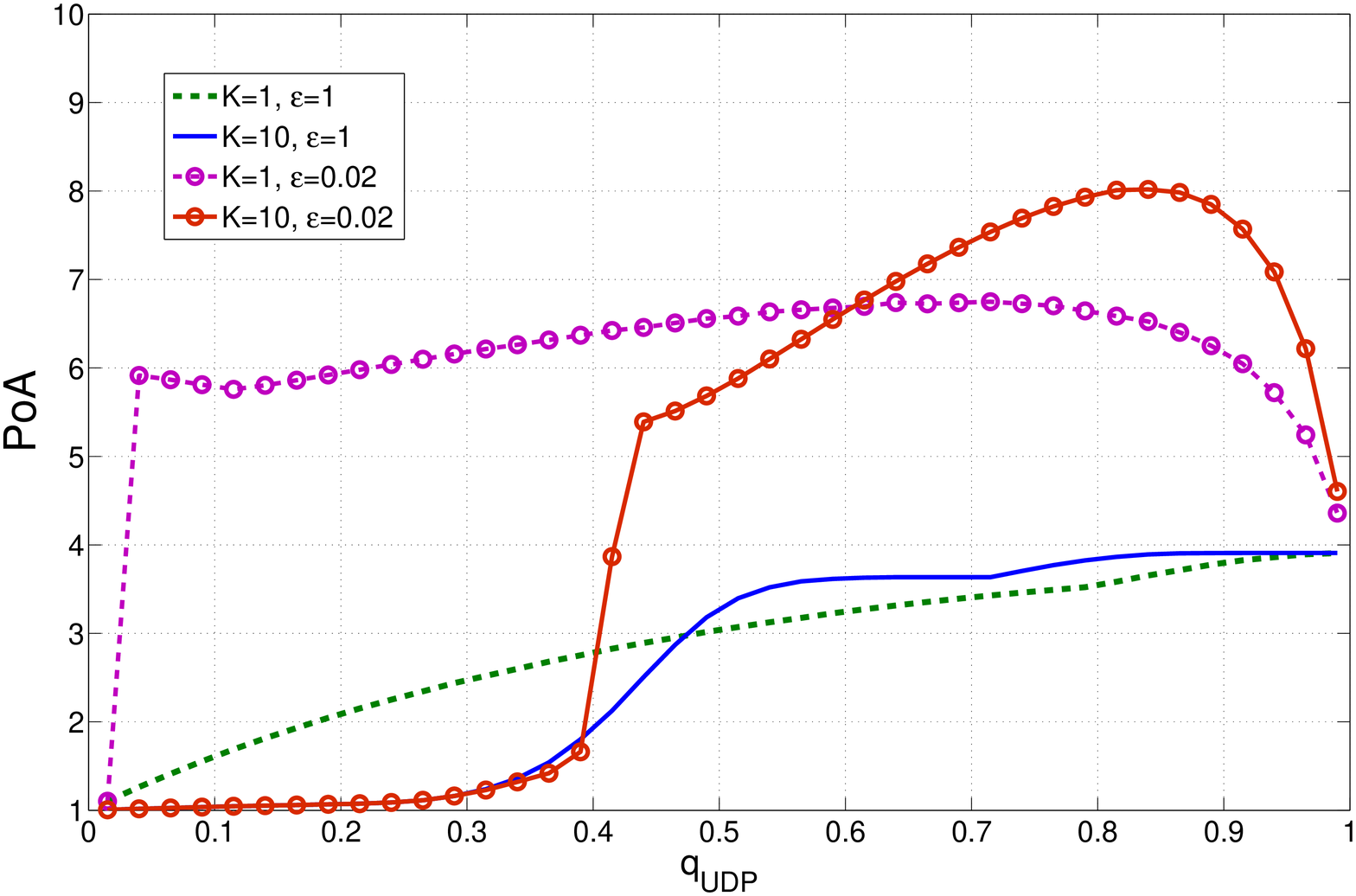}
    \end{center}
    \caption{CAR: Here, a given realization is assumed for the channel. In contrast with existing works on energy-efficient power control which assume $q\rightarrow 1$ and therefore always obtain a high value for the PoA, it is seen here that low values are actually reachable when the packet rate is sufficiently small. With a large enough buffer size ($K$), even for $\epsilon=0.02$, the NE is close to centralized solution if the right $q$ is used.}
    \label{fig:PoAq}
\end{figure}

\begin{figure}[h]
    \begin{center}
  \includegraphics[width=90mm,height=40mm]{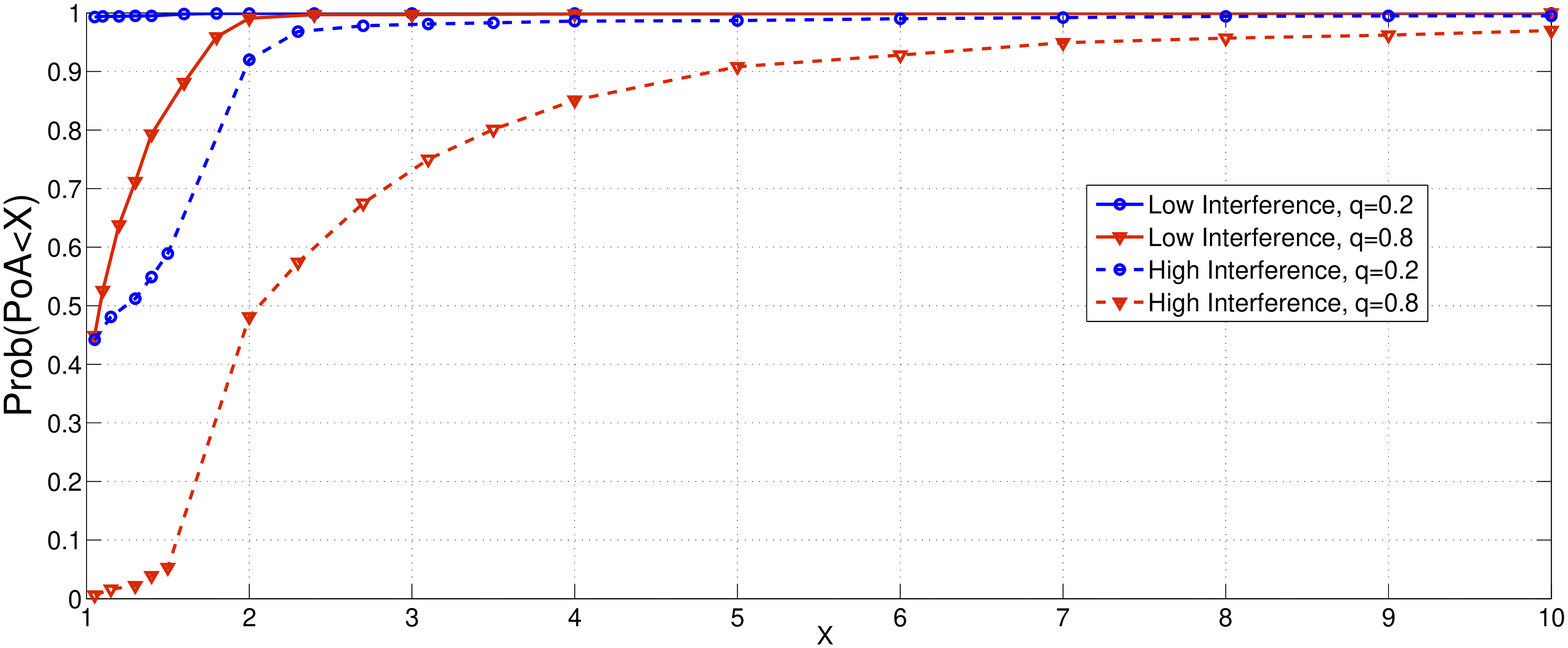}
    \end{center}
    \caption{Contrarily to Fig. \ref{fig:PoAq} which assumes a given channel realization, this figure is obtained by averaging over channel realizations. The CDF of the PoA provides information about how often the price of having a  distributed system is low or high. In this figure as well, we see that even in the high interference regime, a small arrival rate can lead to an efficient equilibrium more often.}
    \label{fig:PoAq2}
\end{figure}

\begin{figure}[h]

 \centering
   \includegraphics[width=90mm]{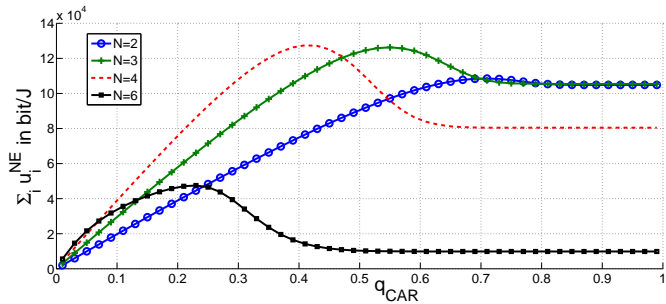}

    \caption{CAR: Remarkably, this figure shows that a communication system can be optimized in terms of used traffic or service. Indeed, there exists an optimal packet rate at which the network EE is maximized. As the number of users increases, the $q$ that corresponds to the best equilibrium, in terms of sum-payoff, decreases. Note that this plot is for the high interference case with $\epsilon=0.02$ resulting in a low equilibrium payoff for $N=8$.}
    \label{fig:NEq}
\end{figure}

In the low interference static channel scenario, Fig. \ref{fig:PoAq} depicts the PoA or price of having a distributed network versus the packet arrival rate for different buffer sizes ($K=1$ and $K=10$) and a raw packet error rate of $\epsilon =0.02$. In contrast with existing works on EE, the PoA can be small in energy-efficient interference networks. This occurs when the arrival rate is typically less than $0.4$ and for a reasonably large buffer size; $K=10$ is in fact quite small while $K=1$ is the minimum buffer size possible and corresponds a very extreme case. The jump observed in the figure around $q=0.4$ at low interference and $q=0.5$ at high interference. This occurs when $q \geq f(\gamma^{NE})$. This jump in the PoA occurs when the value of $q$ crosses this threshold, as the equilibrium power control policy before the jump corresponds to a power control policy closer to the one seen in \cite{goodman-pc-2000}, while after the jump, the equilibrium is closer to the one in \cite{betz}. It is therefore worth noting that, under some realistic conditions, a distributed interference management policies can perform as well as a centralized one. To our knowledge, this observation has not been made before in the literature originating from \cite{goodman-pc-2000} because all the corresponding works assume that the transmitter has always packets to send while this is not the case in many real scenarios (download speeds are often limited by server speeds).

Since our observations regarding the PoA might be thought to be related to the specific realization of the channel, we now provide numerical results which have been obtained by averaging over channel realizations. Fig. \ref{fig:PoAq2} shows the cumulative distribution function (CDF) of the PoA for four parameter settings: Low interference scenario and $q_{\mathrm{CAR}}=0.2$; Low interference scenario and $q_{\mathrm{CAR}} =0.8$; High interference scenario and  $q_{\mathrm{CAR}}=0.2$; and finally, high interference scenario and  $q_{\mathrm{CAR}}=0.8$. This figure confirms that the loss on optimality induced by decentralization is rather small if transmissions are sporadic and interference is not severe.

Now, Fig. \ref{fig:NEq} represents the network sum-payoff, which is an absolute performance measure. In the high interference scenario, for $K=10$, a raw QoS of $\epsilon = 0.02$, the figure depicts the sum-payoff versus $q$ for different numbers of transmitter-receiver pairs ($N \in \{2,3,4,6\}$). This figure illustrates that the sum-payoff at the NE is maximized at a particular $q$ which is seen to decrease with the number of transmitters. This can be intuitively understood, as if the packet arrival rate is reduced, it is possible for more transmitters to experience the same QoS and transmit at a lower power. On the other hand, a very small $q$ implies that the network resources are not being sufficiently exploited, resulting in low efficiency.

\subsection{Gains in terms of energy brought by the cross-layer approach w.r.t. the state-of-the art}

To our knowledge, existing works in the literature originating from \cite{goodman-pc-2000} do not interpret EE maximization as energy minimization. As explained in the paper, both problems are in fact equivalent in communications systems where re-transmissions are allowed. We exploit this interpretation here to go further than just assessing the gains in terms of EE as done classically. Indeed, we assess the gain in terms of energy or average total power brought by the proposed cross-layer approach over the closest state-of-the art solution which is given in \cite{betz} (the latter is obtained by assuming $q\rightarrow 1$ whatever the actual value of $q$). For $q=0.5$ and $q=0.3$, Fig. \ref{fig:Evsb} shows that it is possible to have improvements in terms of energy consumed by the device and not just EE. This (relative) gain can be as high as $28\%$ for $q=0.5$ and $42\%$ for $q=0.3$ in the setting under consideration. Interestingly, this gain can be obtained under the same information assumption as \cite{betz} namely, only individual SINR feedback is needed to implement the power control algorithm which provides the NE performance (after convergence). Note that in this case, $q=1$ offers no gain as the situation is identical to that treated in \cite{betz} while $q \to 0$ would offer maximum gain.

 \begin{figure}[h]
      \begin{center}
        \includegraphics[width=90mm]{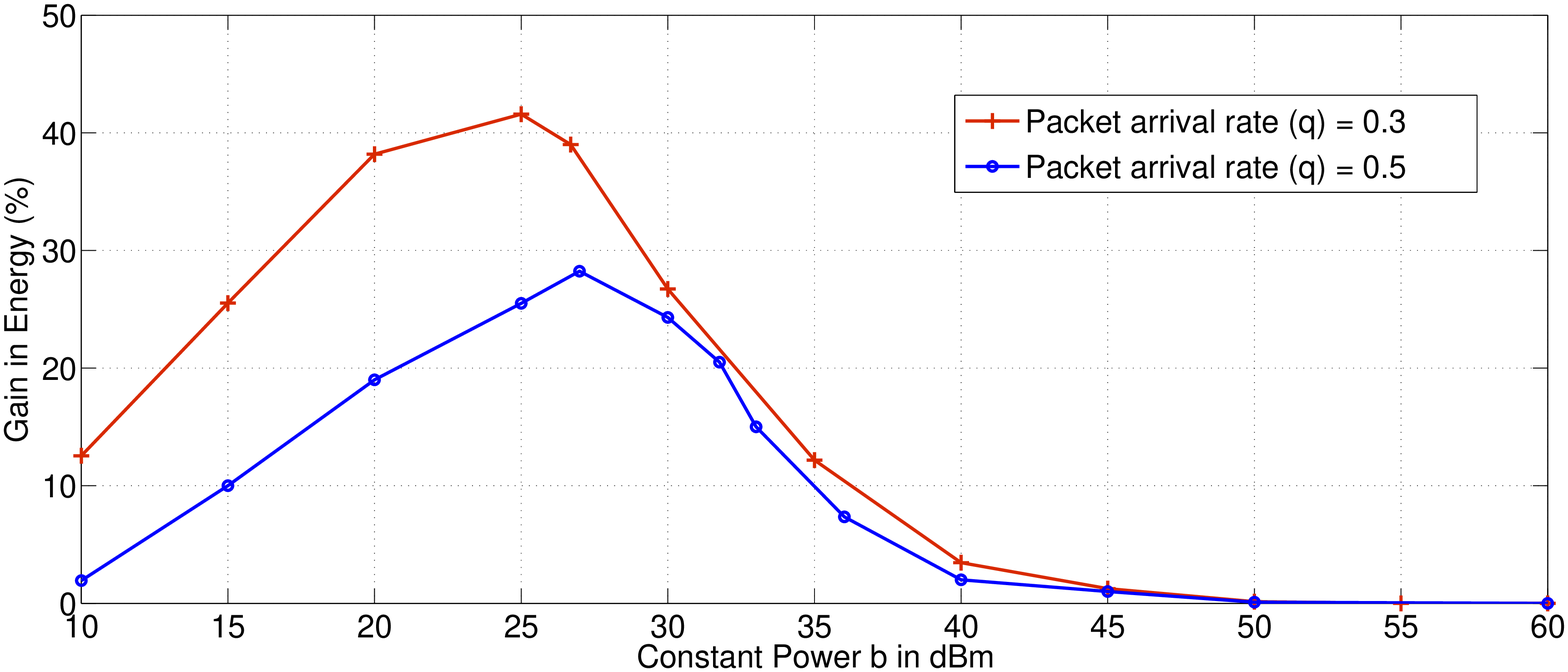}
    \end{center}
    \caption{CAR: Plotting the energy consumed against $b$ with $q=0.6$ and $q=0.3$. We compare the performance of our proposed algorithm against using the best-response dynamics algorithm from \cite{betz} where the presence of the queue is ignored.}
    \label{fig:Evsb}
\end{figure}
 \begin{figure}[h]
      \begin{center}
        \includegraphics[width=90mm]{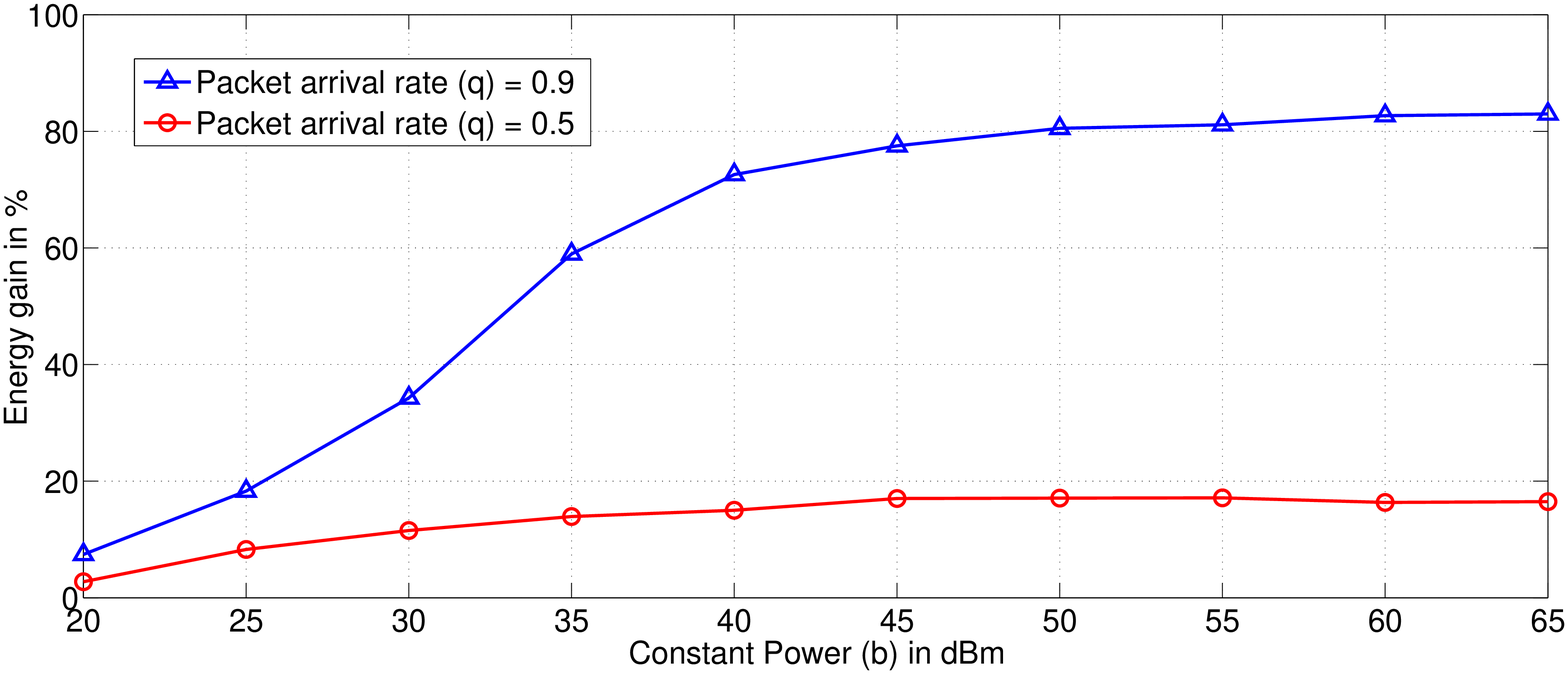}
    \end{center}
    \caption{CAR: Plotting the energy consumed against $b$ with $q=0.5$ and $q=0.9$. We compare the performance of our proposed algorithm against a scheme that just minimizes the transmit power such that the SINR $\geq 25$ dB. We show that our proposed algorithm satisfies this constraint and still consumes less energy.}
    \label{fig:Evsb2}
\end{figure}

As a second comparison in terms of energy, we compared the energy consumed by a transmitter when optimizing (\ref{eq:eta}) with what would obtained by just minimizing the radiated power under an SINR constraint, which is a classical approach. Fig. \ref{fig:Evsb2} corresponds to the relative gain in terms of saved energy as a function of the fixed consumption cost $b$, for $q=0.5$ and $q=0.9$, $R= 8$ Mbps and an SINR target of 25 dB for both approaches in the single user case (interference can make achieving such a target impossible). It is seen that an energy gain of up to $80\%$ can be achieved for sufficiently high values of $b$, which is a quite significant gain and can be easily attained in practice (e.g., maximum radiated power for femto base stations is of the order of one watt while the fixed consumption cost is typically of about a few watts). Note that the gain observed here is maximum when $q=1$ as the highest transmit power is used in this case.

\begin{figure}[h]
    \begin{center}
 \includegraphics[width=90mm]{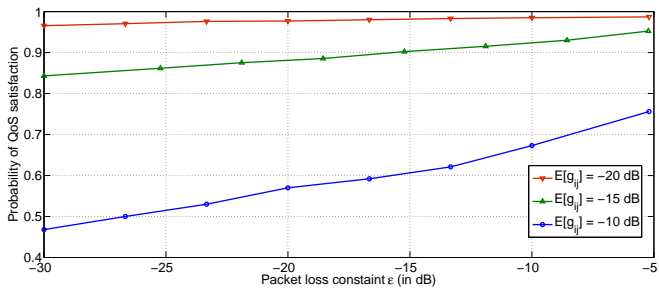}
    \end{center}
    \caption{Probability of meeting the QoS constraint on the packet loss is plotted for $N=3$ users and arrival rate $q=0.5$.}
   \label{fig:qossat}
\end{figure}

\subsection{QoS constraint feasibility}

As the QoS aspect is usually not addressed in the literature of EE power control in the sense of \cite{goodman-pc-2000}, we provide here a figure which shows to what extent the QoS can be met in a typical interference networks where a few users interfere (on a given channel). Fig. 8 represents, for $N=3$ interfering users, the probability that the QoS is met for different values of packet success rate (for instance $-10$ dB represents $\epsilon=10^{-1}$). The average power of the direct links $E(g_ii)$ is set to $0$ dB while different interference levels are considered: $\{-20,-15,-10\}\mathrm{dB}$. This figure clearly indicates the probability with which a terminal will maximize energy-efficiency.

\subsection{Influence of the packet buffer size in the AAR scenario}


So far, we have been assuming the CAR scenario. In particular, this has allowed us to study in detail the influence of the parameter $q$. But, for AAR $q$ is not fixed and varies with the SINR. Fig. \ref{fig:NEtcp} represents, for different numbers of transmitters ($N \in \{2,3,8\}$), the network sum-payoff versus the buffer size for a static channel. The influence of interference (e.g., inter-cell interference) on global energy-efficiency clearly appears. As an important comment, as this simulation shows and many other simulations confirmed this observation (including all simulations assuming CAR instead of AAR), when the buffer size is greater than $10$ typically, the asymptotic regime in terms of buffer size can be assumed to be approximately reached. In practice, this means that, when K is large enough, power control policies might be approximated by implementing the power control policies obtained by assuming $K\rightarrow +\infty$, which corresponds to switching between Cases 1 and 2 (in Sec. \ref{sec:properties}), depending on the current SINR.

\begin{figure}[h]
    \begin{center}
 \includegraphics[width=90mm]{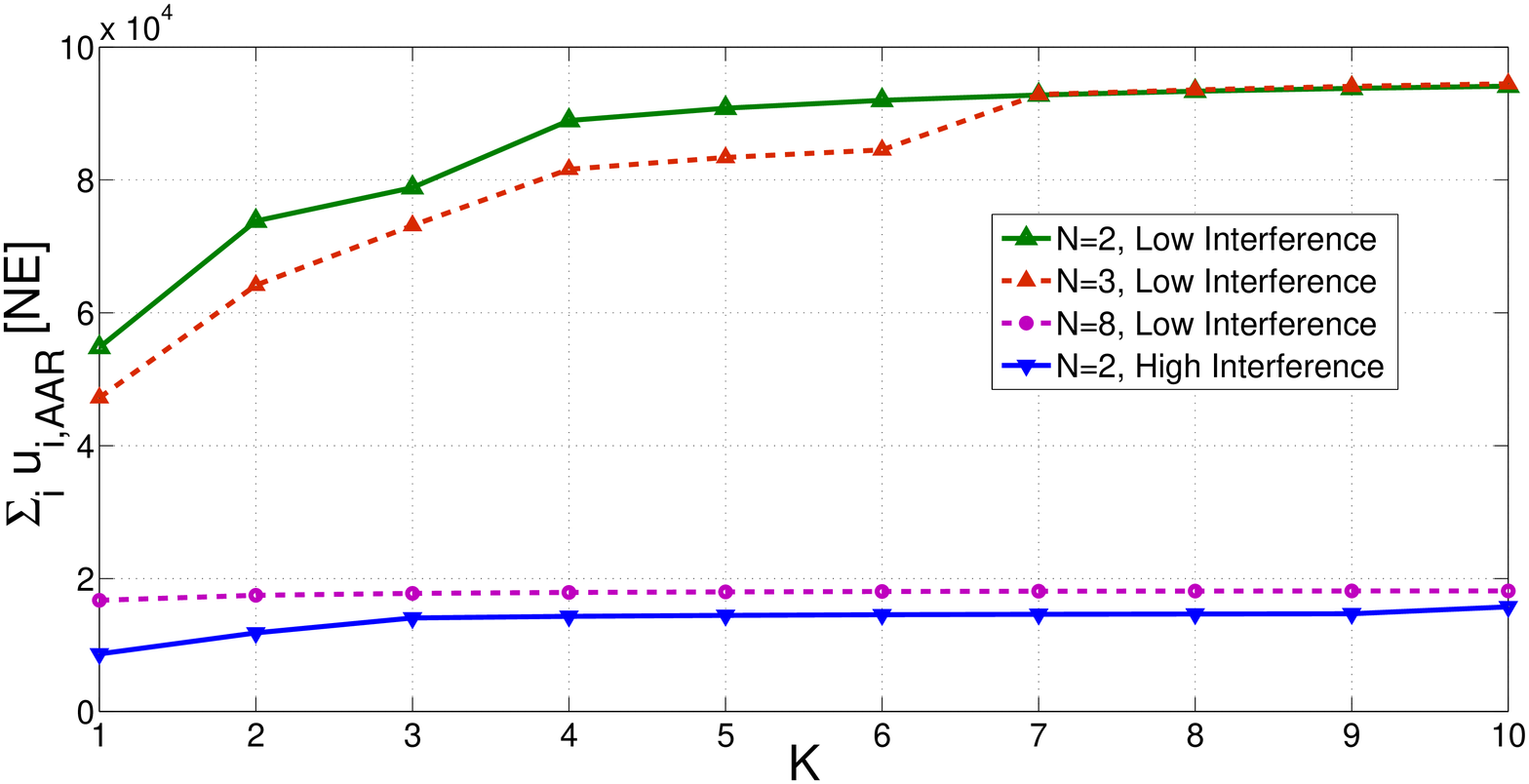}
    \end{center}
    \caption{AAR: We observe that the AAR sum-payoff at the NE is sensitive to the interference level, as seen from the large difference between the two user low and high interference case. With a low interference level, $N=8$ has a higher sum-payoff than for $N=2$ with a high interference level.}
    \label{fig:NEtcp}
\end{figure}

\begin{figure}[h]
    \begin{center}
 \includegraphics[width=90mm]{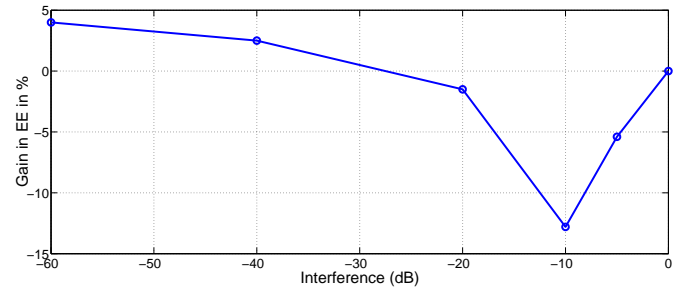}
    \end{center}
    \caption{AAR: Here, we plot the percentage gain in EE v.s $g_{i,j}, i \neq j$ (keeping $g_{i,i}=1$), where the gain is calculated by comparing the EE achieved using the proposed AAR algorithm to the EE at the NE achieved by using the algorithm ignoring the packet level. We observe that in the very low interference regime, the proposed scheme outperforms the other algorithm. However in the low-medium interference region, the NE is inefficient with a high PoA and this results in poor performance.}
    \label{fig:NEtcpvsbetz}
\end{figure}

Fig. \ref{fig:NEtcpvsbetz} studies the average gain in EE (averaged over the channel fading) when compared to that of a power control algorithm ignoring the packet level versus the interference. Here we see that when the interference is very low, the NE of the proposed scheme performs better than an algorithm that ignores the packet level. However, when under interference, the strategy under the AAR scheme would be to use a very high power as EE is individually optimized when the AAR achieves a higher packet rate. This results in a sub-optimal NE as seen in the figure when the interference is in the $[-25,0]$ dB range. This effects indicates that the cross-layer approach might induce some performance loss w.r.t. the classical approach. The authors wanted to emphasize this negative but quite surprising result since it indicates that in distributed networks, refining the modeling aspect can sometimes induce a performance loss; this result can be related to other known paradoxes in distributed networks such as the Braess paradox \cite{Braess-69}.

\section{Conclusion}
\label{sec:conclusion}

Compared to the closest related works, the work reported in this paper possesses three salient features: The (possible) existence of packet buffer with finite size is taken into account; The total power consumed by the transmitter is considered; The proposed formulation considers the QoS. Remarkably, even though the derived energy-efficiency performance metric is seemingly more complex, it possesses all the main properties necessary for designing efficient distributed algorithms. Quite surprisingly, this is not only true when the packet arrival rate is constant (CAR protocol) but also when it is assumed to be adapted as a function of the SINR and the subsequent packet loss through the AAR protocol. One of the consequences of these properties is that the proposed iterative distributed power control algorithm converges towards a unique Nash equilibrium of the power control game associated with both transport protocols.

While the cross-layer generalization of energy-efficient power control is supported by several key analytical results, numerical results strongly support our approach as well. One of the key observations made from simulations is that a distributed power control scheme can perform as well as a centralized solution in some situations; realistic settings under which the PoA is one are clearly identified. Also, it is clearly explained why maximizing EE amounts to minimizing energy in communication systems with re-transmission protocols and this key interpretation is exploited to assess the gain in terms of saved energy brought by the proposed approach.

The proposed approach might be extended in many relevant ways. To address more general wireless scenarios, the most simple extension would be to address the multi-carrier case and also the case of frequency selective channels, these extensions being potentially related. When relevant, receivers might be assumed to implement successive interference cancellation. In order to obtain more efficient equilibrium points (e.g., in the sense of the sum-payoff or a given fairness criterion), it would be of high interest to exploit a more advanced game model such as a stochastic game; this extension is especially relevant if the queue state information has to be exploited. To go further in the direction of having a very realistic wireless network model, a less trivial, but very relevant extension would be to analyze the case of a time-varying number of users. This is definitely both of practical and theoretical interest. Finally, the case of CAR and AAR transmitters simultaneously active in the network has not been studied in this paper. However, our results for quasi-concavity are independent of the protocol used by the other DMs and so the existence of the NE is guaranteed, while uniqueness is left for future extensions.

\section{Acknowledgments}

This work is a joint collaboration between Laboratoire des Signaux et Syst\`{e}mes (L2S) of Sup\'{e}lec, Orange Labs R$\&$D as well as University of Avignon. The work has been done as part of the Operanet 2 European project.

\appendices

\section{Proof of existence of a pure NE in $\mc{G}_{\mathrm{X}}$}
\label{app:qc}

{\bf Proof: } In order to prove the existence of a pure NE in $\mc{G}_{\mathrm{AAR}}$, it is sufficient to show that $\eta_{i,\mathrm{AAR}}(\ul{p}_i)$ is quasi-concave w.r.t $p_i$  as $u_{i,\mathrm{AAR}}(\ul{p}) = \eta_{i, \mathrm{AAR}}(\ul{p})$ and this is a continuous function. However, $u_{i,\mathrm{AAR}}(\ul{p})$ is only semi-continuous and therefore in order prove the existence of a pure NE, proving the quasi-concavity of $\eta_{i,\mathrm{CAR}}(\ul{p}_i)$ (this is done in Appendix \ref{app:qccar}) is only a necessary but not sufficient condition. The additional conditions required to invoke the results in \cite{dasgupta-1986} are proved in the Appendix \ref{app:gepsilon} part of this section.
To prove that $\eta_{i,\mathrm{X}}(\ul{p}_i)$ is quasi-concave w.r.t $p_i$, we consider its reciprocal\\
$\ds{\frac{1}{\eta_{i,\mathrm{X}}(\ul{p}_i )}} =A_i(\ul{p}) + B_{\mathrm{X}}(\gamma_i(\ul{p}))$, where;
$A_i(\ul{p}_i)=\ds{\frac{p_i}{R f(\gamma_i(\ul{p}))}}$, the physical layer factor which depends on the transmit power and the SINR. $B_{\mathrm{X}}(\gamma_i(\ul{p}))=\ds{\frac{b}{R q_{\mathrm{X}}(\gamma_i(\ul{p}))[1 - \Phi_{\mathrm{X}}(\gamma_i(\ul{p}))] }}$, the cross-layer factor which depends on the protocol X and the SINR. Recall that $f(\gamma)$ is increasing and initially convex for $\gamma \in [0,\gamma_+]$ and eventually concave for $\gamma \in [\gamma_+,\infty)$. So, we have that $A_i(\ul{p})$ is decreasing in the interval $\gamma_i(\ul{p}) \in [0,\gamma_+]$ and convex w.r.t $p_i$ for $\gamma_i(\ul{p}) \in [\gamma_+,\infty)$. If a function is continous and differenciable, then it is sufficient to show that it is convex at all local minima/maxima for quasi-convexity. The inverse function we consider is strictly decreasing in the interval $[0,\gamma_+]$ and thus, can not have a maxima/minima in this interval. Hence, once we prove that it is convex in the other interval, we prove quasi-concavity of the original function.
If $B_{\mathrm{X}}(\gamma_i(\ul{p}))$ is a monotonically decreasing function and is convex for $\gamma_i \geq \gamma_{+}$, then we have $\ds{\frac{1}{A_i(\ul{p})+B_{\mathrm{X}}(\gamma_i(\ul{p}))}}$ quasi-concave w.r.t $p_i$ \cite{boyd,rodriguez}.
From (\ref{eq:gamma}), $\ds{\frac{\partial{\gamma_i(\ul{p})} }{\partial p_i}}$ is a constant and in the following sections, to prove that $\ds{\frac{\partial B_{\mathrm{X}}}{\partial p_i}} <0$ and $\ds{\frac{\partial^2 B_{\mathrm{X}}}{\partial p_i^2}}>0$, we just prove that:\\
$$\ds{\frac{\partial B_{\mathrm{X}}}{\partial \gamma_i} }<0, \quad \mbox{and} \quad \ds{\frac{\partial^2 B_{\mathrm{X}}}{\partial \gamma_i^2}}>0.$$

\subsection{Proof of quasi-concavity of $\eta_{i,\mathrm{CAR}}(\ul{p}_i)$}
\label{app:qccar}
In this sub-section we prove that the required conditions for quasi-concavity are satisfied under a CAR scheme. For CAR, $q_{\mathrm{CAR}}(\gamma_i(\ul{p}))=q$ is a constant. Now let us study the derivatives of the function $B_{\mathrm{CAR}}(\gamma_i)$ w.r.t $\gamma_i$;
\begin{equation}
\frac{\partial B_{\mathrm{CAR}}(\gamma_i)}{\partial \gamma_i} = \frac{b}{Rq (1-\Phi_{\mathrm{CAR}}(\gamma_i))^2}\frac{\partial \Phi_{\mathrm{CAR}}(\gamma_i)}{\partial \gamma_i},
\label{eq:d1bp}
\end{equation}
and
\begin{eqnarray}
& \ds{ \frac{\partial^2 B_{\mathrm{CAR}}(\gamma_i)}{\partial \gamma_i^2} =   \frac{b}{Rq (1-\Phi_{\mathrm{CAR}}(\gamma_i))^2}  } \nonumber &\\ &
\ds{ \left(\frac{\partial^2\Phi_{\mathrm{CAR}}(\gamma_i)}{\partial \gamma_i^2} +\left(\frac{\partial \Phi_{\mathrm{CAR}}(\gamma_i) }{\partial p_i} \right)^2 \frac{2}{1-\Phi_{\mathrm{CAR}}(\gamma_i)} \right). }&
\label{eq:d2bp}
\end{eqnarray}
From (\ref{eq:d1bp}), we see that showing $\frac{\partial \Phi_{\mathrm{CAR}}}{\partial \gamma_i} <0$ is sufficient for proving that $\frac{\partial B_{\mathrm{CAR}}(\gamma_i)}{\partial \gamma_i} <0$ as the other terms are always positive.\\
Similarly, $\frac{\partial^2\Phi_{\mathrm{CAR}}}{\partial \gamma_i^2} >0$ is sufficient  for proving that $\frac{\partial^2 B_{\mathrm{CAR}}(p)}{\partial \gamma_i^2} >0$.\\
\begin{equation}
\frac{\partial\Phi_{\mathrm{CAR}}}{\partial \gamma_i} = \left(1-f(\gamma_i)\right)\frac{\partial\Pi_{\mathrm{CAR}}}{\partial \gamma_i} - \Pi_{\mathrm{CAR}} \frac{\partial f(\gamma_i)}{\partial \gamma_i}
\label{eq:d1pip}
\end{equation}
\begin{eqnarray}
&\ds{\frac{\partial^2\Phi_{\mathrm{CAR}}}{\partial \gamma_i^2} = \left(1-f(\gamma_i)\right)\frac{\partial^2\Pi_{\mathrm{CAR}}}{\partial p^2} -} & \nonumber \\ &
\ds{2\frac{\partial\Pi_{\mathrm{CAR}}}{\partial \gamma_i}\frac{\partial f(\gamma_i)}{\partial \gamma_i}-\Pi_{\mathrm{CAR}}\frac{\partial^2 f(\gamma_i)}{\partial \gamma_i^2}.}&
\label{eq:d2pip}
\end{eqnarray}
For  $\frac{\partial\Phi_{\mathrm{CAR}}  }{\partial \gamma_i} <0$, by examining (\ref{eq:d1pip}), we see that showing  $\frac{\partial\Pi_{\mathrm{CAR}}}{\partial \gamma_i} <0$ is sufficient.\\
We have $\omega_{\mathrm{CAR}} = \frac{q}{1-q}\frac{1-f(\gamma_i)}{f(\gamma_i)}$ and so:
\begin{equation}
 \frac{\partial \omega_{\mathrm{CAR}} }{\partial \gamma_i}=\frac{-q}{(1-q)f(\gamma_i)^2}\frac{\partial f(\gamma_i)}{\partial \gamma_i}<0.
\end{equation}
It can be easily verified that $\frac{\partial^2 \omega_{\mathrm{CAR}} }{\partial \gamma_i^2} > 0$ for $\gamma_i \geq \gamma^+$.
Express $\frac{1}{\Pi_{\mathrm{CAR}}}= 1+\frac{1}{\omega_{\mathrm{CAR}}}+...+\frac{1}{\omega_{\mathrm{CAR}}^K}$. Differentiating with respect to $\gamma_i$, we have
\begin{equation}
\frac{\partial\Pi_{\mathrm{CAR}}}{\partial \gamma_i} = \Pi_{\mathrm{CAR}}^2 \frac{\partial\omega_{\mathrm{CAR}}}{\partial \gamma_i}\left(\frac{1}{\omega_{\mathrm{CAR}} ^2}+...+\frac{K}{\omega_{\mathrm{CAR}}^{K+1}}\right) <0.
\label{eq:dpipexp}
\end{equation}
Again, it can be verified that $\frac{\partial^2 \Pi_{\mathrm{CAR}}}{\partial \gamma_i^2} > 0$. Thus, we have:
\begin{equation}
\frac{\partial \Phi_{\mathrm{CAR}} }{\partial \gamma_i} < 0
\label{eq:dphi}
\end{equation}
and
\begin{equation}
\frac{\partial^2 \Phi_{\mathrm{CAR}} }{ \partial \gamma_i^2 } > 0
\label{eq:d2phi}
\end{equation}
Now, following the argument from the start, we have $\eta_{\mathrm{CAR}}(p_i,\ul{p}_{-i})$ to be quasi-concave. Since there exists some power $p_i$ for which $\eta_{i,\mathrm{CAR}}(p_i)$ is maximized, we have proved that there exists a unique $p_i^*$ for which the EE is optimized.
$\blacksquare$
We are able to determine the optimal power $p_i^*$ which maximize the EE function, by solving the following equation:
\begin{eqnarray}
&0 = - \frac{\partial\Phi_{\mathrm{CAR}}}{\partial p_i}\left(b + \frac{p_iq(1-\Phi_{\mathrm{CAR}})}{f(\gamma_i(\ul{p}))}\right) +& \nonumber \\
& (1-\Phi_{\mathrm{CAR}})\left( \frac{\partial\Phi_{\mathrm{CAR}}}{\partial p_i}\frac{p_i}{f(\gamma_i(\ul{p}))} +\frac{\partial (p_i/f(\gamma_i(\ul{p})) )}{\partial p_i}\right).&
\label{eq:etad}
\end{eqnarray}

\subsection{Proof of existence of a pure NE in $\mc{G}_{\mathrm{CAR}}$}
\label{app:gepsilon}

\begin{definition}
Let $f: \mathcal{X} \to \mathcal{R}$ be a function from $\mathcal{X} \subset \mathcal{R}$ to $\mathcal{R}$. $f$ is said to be upper (or lower) semi-continuous at $x_0 \in \mathcal{X}$, if for every $\epsilon>0$, there is a neighborhood $U$ of $x_0$ such that $f(x) \leq f(x_0)+\epsilon,\, \forall x \in U$ (and for lower semi-continuous, $f(x) \geq f(x_0) + \epsilon,\, \forall x \in U$).
\end{definition}

Proof: Here we use the result in \cite{dasgupta-1986} (Corollary of Theorem 2 in \cite{dasgupta-1986}) which states that:
\newtheorem{theory}{Theorem}
\begin{theory}
$\forall i$, $\mathcal{A}_i \subset \mathbb{R}^N$ be non-empty, convex and compact, and let $U_i: \mathcal{A}_i \to \mathbb{R}$ be quasi-concave in $u_i$ and upper semi-continuous. Define $V_i(a_{-i})=\max[U_i(a_i,a_{-i})]$. If $V_i$ is lower semi-continuous in $a_{-i}$ then, the game $(\mc{N},\mathcal{A},U)$ has a pure NE.
\label{th:dasgupta}
\end{theory}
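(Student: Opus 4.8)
The plan is to prove this general existence result by applying Kakutani's fixed-point theorem to the joint best-response correspondence, in the spirit of the classical Nash and Debreu--Glicksberg--Fan arguments, but exploiting the assumed lower semi-continuity of $V_i$ to cope with the possible discontinuity of the payoffs (which is exactly the setting for which a direct continuity-of-$U_i$ argument fails). Interpreting the hypotheses in their intended form (each $U_i$ defined on the full profile space $\mathcal{A}=\prod_i \mathcal{A}_i$, quasi-concave in the own variable $a_i$, and jointly upper semi-continuous), I would first define for each player $i$ the best-response correspondence $\mathrm{BR}_i(a_{-i})=\arg\max_{a_i\in\mathcal{A}_i} U_i(a_i,a_{-i})$, and the product correspondence $\mathrm{BR}(a)=\prod_i \mathrm{BR}_i(a_{-i})$ on $\mathcal{A}$. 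A profile $a^*$ is a pure Nash equilibrium if and only if $a^*\in\mathrm{BR}(a^*)$, so it suffices to produce a fixed point of $\mathrm{BR}$.

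Next I would verify the hypotheses of Kakutani's theorem in turn. The domain $\mathcal{A}$ is non-empty, convex and compact as a finite product of such sets. For non-emptiness of $\mathrm{BR}_i(a_{-i})$: since $\mathcal{A}_i$ is compact and $a_i\mapsto U_i(a_i,a_{-i})$ is upper semi-continuous, the supremum is attained (the upper-semicontinuous analogue of the Weierstrass extreme-value theorem), so the argmax set is non-empty and the value $V_i(a_{-i})$ is finite and well defined. For convexity of $\mathrm{BR}_i(a_{-i})$: quasi-concavity of $U_i$ in $a_i$ makes the superlevel set $\{a_i : U_i(a_i,a_{-i})\geq V_i(a_{-i})\}$ convex, and this set coincides exactly with the argmax. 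Hence $\mathrm{BR}$ has non-empty, convex values.

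The crux of the argument, and the step where the assumption on $V_i$ is indispensable, is showing that $\mathrm{BR}$ has a closed graph. Here I would take sequences $a^{(n)}\to a^*$ and $b^{(n)}\in\mathrm{BR}(a^{(n)})$ with $b^{(n)}\to b^*$, and prove $b^*_i\in\mathrm{BR}_i(a^*_{-i})$ for each $i$. Since $b^{(n)}_i$ is a best response, $U_i(b^{(n)}_i,a^{(n)}_{-i})=V_i(a^{(n)}_{-i})$. Joint upper semi-continuity of $U_i$ gives $\limsup_n U_i(b^{(n)}_i,a^{(n)}_{-i})\leq U_i(b^*_i,a^*_{-i})$, while lower semi-continuity of $V_i$ gives $\liminf_n V_i(a^{(n)}_{-i})\geq V_i(a^*_{-i})$. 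Chaining these with the trivial bound $U_i(b^*_i,a^*_{-i})\leq V_i(a^*_{-i})$ forces the equality $U_i(b^*_i,a^*_{-i})=V_i(a^*_{-i})$, i.e.\ $b^*_i$ is a best response to $a^*_{-i}$. This is precisely the discontinuous-game substitute for the upper hemicontinuity one would otherwise extract from continuity of the payoffs.

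With non-empty convex values and a closed graph on a non-empty compact convex domain, Kakutani's theorem yields a fixed point $a^*\in\mathrm{BR}(a^*)$, which is the desired pure Nash equilibrium. The main obstacle I anticipate is purely the closed-graph verification: one must be careful that upper semi-continuity is applied to the payoff evaluated \emph{along} the converging sequence, and that lower semi-continuity of $V_i$ is exactly what prevents the best-response value from collapsing in the limit. Once these two one-sided continuity properties are combined in the chained inequality above, optimality of the limit point follows without requiring any continuity of $U_i$ itself.
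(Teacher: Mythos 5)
Your proof is correct, but there is nothing in the paper to compare it against: the paper never proves this statement. It is quoted as an external result (the Corollary of Theorem 2 in Dasgupta--Maskin, 1986) and invoked as a black box in Appendix~B to deduce existence of a pure NE in $\mathcal{G}_{\mathrm{CAR}}$; the paper's actual proof work consists of verifying the theorem's \emph{hypotheses} for $u_{i,\mathrm{CAR}}$ (upper semi-continuity, quasi-concavity, lower semi-continuity of the best-response value), not the theorem itself. Your Kakutani-based argument is, in substance, the standard proof of that corollary: the chained inequality $V_i(a^*_{-i}) \leq \liminf_n V_i(a^{(n)}_{-i}) = \liminf_n U_i(b^{(n)}_i,a^{(n)}_{-i}) \leq \limsup_n U_i(b^{(n)}_i,a^{(n)}_{-i}) \leq U_i(b^*_i,a^*_{-i}) \leq V_i(a^*_{-i})$ is exactly how lower semi-continuity of the value function substitutes for (and is in fact implied by) Dasgupta--Maskin's ``graph continuity'' hypothesis in closing the graph of the best-response correspondence, after which quasi-concavity gives convex argmax sets and Kakutani applies; closed graph plus compact range does yield the upper hemicontinuity Kakutani needs. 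Two interpretive points you handled correctly and that are worth flagging explicitly: the theorem as typeset in the paper is garbled ($U_i$ must be defined on $\mathcal{A}=\prod_j \mathcal{A}_j$ rather than $\mathcal{A}_i$, and quasi-concavity is in $a_i$, not $u_i$), and the upper semi-continuity must be \emph{joint} in the profile --- u.s.c. in the own action alone suffices for attainment of the maximum on the compact $\mathcal{A}_i$, but not for your closed-graph step, where $U_i$ is evaluated along a sequence in which $a_{-i}$ varies as well. Under that (intended) reading, every step of your argument checks out, and what it buys over the paper's citation is a self-contained justification of the one genuinely game-theoretic ingredient the paper imports.
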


In this section, we prove that $u_{i,\mathrm{CAR}}$ is upper semi-continuous and quasi-concave, and that the newly defined function $V_i(\ul{p}_{-i})=\max[u_{i,\mathrm{CAR}}(p_i,\ul{p}_{-i})]$ is lower semi-continuous. Here, we identify $V_i$ as the payoff of the best-response, i.e., $V_i(\ul{p}_{-i})=u_{i,\mathrm{CAR}}(\mathrm{BR}_{i,\mathrm{CAR}}(\ul{p}_{-i}),\ul{p}_{-i})$.
Studying the specific cases of $\mc{G}_{\mathrm{CAR}}$:
Note that $\theta((\ul{p}))< \eta_{i,\mathrm{CAR}}(\ul{p}), \: \forall \ul{p} \in [0,P_{\max}]^K$ by construction. Define $p_i^{+}(\ul{p}_{-i}): \Phi_{\mathrm{CAR}} (\gamma_i(p_i^+(\ul{p}_{-i}),\ul{p}_{-i}))=\epsilon$ (this is unique as $\Phi_{\mathrm{CAR}}$ is a monotonically decreasing and hence invertible function of $p_i$ as seen from the previous subsection) and $p_i^*(\ul{p}_{-i}): \ds{\frac{\partial \eta_{\mathrm{CAR}}(p_i^*(\ul{p}_{-i}),\ul{p}_{-i})}{\partial p_i} =0}$. There are several cases possible:
\normalfont
\color{black}
\begin{enumerate}
\item $p_i^+(\ul{p}_{-i}) \geq P_{\max}$: Here, $u_{i,\mathrm{CAR}}(\ul{p})$ is a strictly increasing function and maximizes at $P_{\max}$.
\item $p_i^*(\ul{p}_{-i}) \leq p_i^+(\ul{p}_{-i}) < P_{\max}$: Here $u_{i,\mathrm{CAR}}(\ul{p})$ is a strictly increasing function in the interval $p_i=[0,p_i^+(\ul{p}_{-i}))$ and after a point of discontinuity at $p_i^+(\ul{p}_{-i})$, is strictly decreasing in the interval $[p_i^+(\ul{p}_{-i}),P_{\max}]$. So $u_{i,\mathrm{CAR}}$ maximizes at $p_i^+(\ul{p}_{-i})$.
\item $p_i^+(\ul{p}_{-i}) < p_i^*(\ul{p}_{-i})$:  $u_{i,\mathrm{CAR}}(\ul{p})$ is strictly increasing in the interval $[0,p_i^+(\ul{p}_{-i}))$ and after a point of discontinuity at $u_i^+$, is quasi-concave in the interval $[p_i^+(\ul{p}_{-i}),P_{\max}]$. So $u_{i,\mathrm{CAR}}$ maximizes at $p_i^*(\ul{p}_{-i})$.
\end{enumerate}
In all three cases, $u_{i,\mathrm{CAR}}(\ul{p})$ is upper semi-continuous and quasi-concave (See Appendix \ref{app:qccar} for properties of $\eta_{i,\mathrm{CAR}}$). Also, $u_i(p_i^*(\ul{p}_{-i}),\ul{p}_{-i})$ is a continuous function in $\ul{p}_{-i}$ and for small $\ul{p}_{-i}$, $\mathrm{BR}_{i,\mathrm{CAR}}(\ul{p}_{-i})=p_i^*$ . After the point where $p_i^*(\ul{p}_{-i})=p_i^{+}(\ul{p}_{-i})$, as $\ul{p}_{-i}$ increases further,  $\mathrm{BR}_{i,\mathrm{CAR}}(\ul{p}_{-i})=p_i^+$ which is also continuous. And so $\mathrm{BR}_{i,\mathrm{CAR}}(\ul{p}_{-i})$ is in fact continuous and increasing in $\ul{p}_{-i}$.\\
$V_i$ is continuous in the interval $\ul{p}_{-i} \leq \ul{p}_{-i}^+: p_i^+(\ul{p}_{-i}^+)=P_{\max}$ and is given by $\eta_{i,\mathrm{CAR}}$. For $\ul{p}_{-i}>\ul{p}_{-i}^+$, $V_i$ jumps down according to the definition in (\ref{eq:utiludp}) and is thus, lower semi-continuous.  Using Theorem \ref{th:dasgupta}, we have the result that the Game admits a pure NE.
$\blacksquare$

\subsection{The case of AAR}
\label{app:qctcp}
In this sub-section we prove that the required conditions for quasi-concavity are satisifed under an AAR scheme. For AAR, $q_{\mathrm{AAR}}  (\gamma_i(\ul{p}))$ is determined by the (\ref{eq:tcpfpe}). In AAR, from (\ref{eq:tcp}), we know $q=g(\phi)$ and so $\Phi=g^{-1}(q)$ where $g^{-1}$ is the function inverse of $g(.)$ which is assumed to exist, be twice differentiable, strictly decreasing and convex.
And so we have the following equation for $B_{\mathrm{AAR}}(\gamma_i)$:
\begin{equation}
B_{\mathrm{AAR}}(\gamma_i(\ul{p})) =\frac{b } { R( q_{\mathrm{AAR}} (1-  g^{-1}(q_{\mathrm{AAR}} (\gamma_i(\ul{p})))}.
\end{equation}
Now let us study the derivatives of the function $B_{\mathrm{AAR}}$ w.r.t $\gamma_i$ as $\frac{\partial \gamma_i}{\partial p_i} >0$ and is a constant. So the sign of these derivatives do not change even when differentiated w.r.t $p_i$.
\begin{eqnarray}
&\ds{ \frac{-b}{R}  \frac{\partial B_{\mathrm{AAR}}(\gamma_i)}{\partial \gamma_i} =  }&  \\
&
\ds{\frac{q_{\mathrm{AAR}}' (\gamma_i)(1-  g^{-1}(q_{\mathrm{AAR}}) ) - q_{\mathrm{AAR}} (\gamma_i) q_{\mathrm{AAR}}' (\gamma_i )(g^{-1})'(q_{\mathrm{AAR}}) }{[q_{\mathrm{AAR}}(\gamma_i) (1-g^{-1}(q_{\mathrm{AAR}}(\gamma_i) ))]^2} }& \nonumber
\end{eqnarray}
and
\begin{eqnarray}
& \ds{\frac{R}{b} \frac{\partial^2 B_{\mathrm{AAR}}(\gamma_i)}{\partial \gamma_i^2} }= &  \\
&
\ds{\frac{[q_{\mathrm{AAR}}' (\gamma_i)(1-  g^{-1}(q_{\mathrm{AAR}}) ) - q_{\mathrm{AAR}} (\gamma_i) q_{\mathrm{AAR}}' (\gamma_i )(g^{-1})'(q_{\mathrm{AAR}}) ]^2}{[q_{\mathrm{AAR}}(\gamma_i) (1-g^{-1}(q_{\mathrm{AAR}}(\gamma_i) ))]^3}}   & \nonumber \\
& \ds{- \frac{[q_{\mathrm{AAR}}'' (\gamma_i)(1-  g^{-1}(q_{\mathrm{AAR}}) ) -2q_{\mathrm{AAR}}' (\gamma_i )^2 (g^{-1})'(q_{\mathrm{AAR}}) ]}{[q_{\mathrm{AAR}}(\gamma_i) (1-g^{-1}(q_{\mathrm{AAR}}(\gamma_i) ))]^2}+   } & \nonumber \\&
\ds{ \frac{q_{\mathrm{AAR}}(\gamma_i) [q_{\mathrm{AAR}}''(\gamma_i)  (g^{-1})'(q_{\mathrm{AAR}})  +  q_{\mathrm{AAR}}' (\gamma_i )^2 (g^{-1})''(q_{\mathrm{AAR}}) ]}{[q_{\mathrm{AAR}}(\gamma_i) (1-g^{-1}(q_{\mathrm{AAR}}(\gamma_i) ))]^2}   }   &  \nonumber
\end{eqnarray}
From the above expressions, we deduce that the requirements for $B_{\mathrm{AAR}}$ to be decreasing and convex, knowing $(g^{-1})'(q_{\mathrm{AAR}}) \leq 0$ and  $(g^{-1})''(q_{\mathrm{AAR}}) \geq 0$ is that
\begin{equation}
\frac{\partial q_{\mathrm{AAR}} (\gamma_i) }{\partial \gamma_i}  \geq 0.
\label{eq:ineqcondforcon1}
\end{equation}
\begin{equation}
\frac{\partial ^2q_{\mathrm{AAR}} (\gamma_i) }{\partial \gamma_i^2}  \leq 0.
\label{eq:ineqcondforcon}
\end{equation}
Now, we exploit the AAR based fixed point equation:
\begin{equation}\label{eq:tcprel}
g^{-1}(q_{\mathrm{AAR}}(\gamma_i))= \frac{1-f(\gamma_i)}{1+\omega_{\mathrm{AAR}}(\gamma_i)^{-1}+\omega_{\mathrm{AAR}}(\gamma_i)^{-2}+\dots
+\omega_{\mathrm{AAR}}(\gamma_i)^{-K}}.
\end{equation}

Differentiating (\ref{eq:tcprel}) w.r.t $\gamma_i$ once, we get that $\ds{\frac{\partial q_{\mathrm{AAR}} (\gamma_i) }{\partial \gamma_i} } \geq 0 $ and differentiating twice, we get that the inequality (\ref{eq:ineqcondforcon}) is satisfied for $\gamma_i \geq \gamma_{+}$.

Thus we have shown that $\eta_{i,\mathrm{AAR}}$ is quasi-concave w.r.t $p_i$ for the AAR case.
$\blacksquare$

\section{Best-responses for $\mathcal{G}_{\mathrm{X}}$ are standard}
\label{app:standard}

This proof holds for both cases of CAR and AAR. Here, we prove that the best-responses are monotonic and scalable (standard) if $\eta_{i,\mathrm{X}}(\ul{p})$ is quasi-concave w.r.t $p_i$. A function $F(x)$ is standard, if it satisfies the following properties:
\begin{enumerate}
\item  $F(x_1) \geq F(x_2)$, if $x_1 \geq x_2$: Monotonic
\item  $F(\lambda x) \leq \lambda F(x)$, if $\lambda \geq 1$: Scalable
\end{enumerate}

Consider $\ul{P}_{-j} := \lambda \ul{p}_{-j}$, where $\lambda>1$.

$\mathrm{BR}_{j,\mathrm{X}}(\ul{p}_{-j})$ can be calculated by solving for $\gamma^*_j$ in
\begin{equation}
0 = \frac{\partial A(\rho_j^*,p_j)}{\partial p_j} +  \frac{\partial B(\gamma^*_j)}{\partial p_j}
\end{equation}
which can be simplified to
\begin{equation}
0 = \hat{A}(\gamma_j^*) +  C(\ul{p}_{-j})\hat{B}(\gamma_j^*)
\label{eq:derivedf}
\end{equation}
where $\hat{A}(\gamma_j^*)= \ds{\frac{f(\gamma_j^*)-f'(\gamma_j^*)\gamma_j^*}{f^2(\gamma^*_j)}}$,  $C(\ul{p}_{-j})=\ds{\frac{b}{\sigma^2+\sum_{i\neq j}h_ip_i}}$ and $\hat{B}(\gamma_j^*)=\ds{\frac{ \partial B(\gamma_j^*) } {\partial \gamma_j}}$.
As $A$ is convex and $\hat{B}$ negative, (proved in App. \ref{app:qc}), we can conclude that $\gamma^*_j(\ul{P}_{-j}) \leq \gamma^*_j(\ul{p}_{-j}) $.
Thus, $\mathrm{BR}_{j,\mathrm{X}}(\ul{P}_{-j}) \leq  \lambda \mathrm{BR}_{j,\mathrm{X}}(\ul{p}_{-j})$ as $p_j=\gamma_j (\sigma^2 + \sum_{i \neq j} g_{ij} p_i)$.
Therefore the best-responses for the game are scalable.

Now consider $\ul{P}_{-j} \geq \lambda \ul{p}_{-j}$ such that $(\sigma^2 + \sum_{i \neq j} g_{ij} P_i)=\lambda (\sigma^2 + \sum_{i \neq j} g_{ij} p_i)$  , where $\lambda>1$.
Let $\gamma_j^{**}$ (where $\mathrm{BR}_{j,\mathrm{X}}(\dot{\ul{p}_{-j}})=\gamma_j^{**}(\sigma^2 + \sum_{i \neq j} h_i \ul{P}_i)$ is the best-response) satisfy
\begin{equation}
0 = \hat{A}(\gamma_j^{**}) +  \frac{C(\ul{p}_{-j})}{\lambda}\hat{B}(\gamma_j^{**})
\end{equation}
Now replace $\gamma_j^{**}$ by $\frac{\gamma_j^*}{\lambda}$ and we have
\begin{eqnarray}
& \hat{A}(\gamma_j^{*}\lambda^{-1}) +  \frac{C(\ul{p}_{-j})}{\lambda}\hat{B}(\gamma_j^{*}\lambda^{-1}) \leq \\
&\lambda^{-1} \hat{A}(\gamma_j^{*})+  \frac{C(\ul{p}_{-j})}{\lambda^2}\hat{B}(\gamma_j^{*}) \leq  \\
& \frac{\hat{A}(\gamma_j^*) +  C(\ul{p}_{-j})\hat{B}(\gamma_j^*)}{\lambda}  \leq 0.
\end{eqnarray}
The above inequalities are a result of the properties of $\hat{A}$ and $\hat{B}$ given in App. \ref{app:qc}.

Which shows that $\gamma_j^{**}  \geq \frac{\gamma_j^*}{\lambda}$ and thus, $\mathrm{BR}_{j,\mathrm{X}}(\ul{P}_{-j}) \geq \mathrm{BR}_{j,\mathrm{X}}(\ul{p}_{-j})$ and hence the best-responses are monotonic. As all the powers played are positive, the best-response functions satisfy the two requirements and so are standard functions.
$\blacksquare$
\bibliographystyle{plain}

\end{document}